\newtheorem{theorem}{Theorem}[section]
\pgfplotsset{compat = 1.14}
\newcommand{\Var}{\mathbb{V}\text{ar}}
\newcommand{\indep}{\perp\!\!\!\perp}
\tikzstyle{obsvars} = [rectangle, rounded corners, minimum width=1cm, minimum height=1cm,text centered, draw=black, fill=green!30]
\tikzstyle{latents} = [rectangle, rounded corners, minimum width=1cm, minimum height=1cm,text centered, draw=black, fill=orange!30]
\tikzstyle{invisible} = [rectangle, rounded corners, minimum width=1cm, minimum height=1cm, text centered, draw=white, fill=white!30]
\tikzstyle{arrow} = [thick,->,>=stealth]
\newcolumntype{L}[1]{>{\raggedright\let\newline\\\arraybackslash\hspace{0pt}}m{#1}}
\newcolumntype{C}[1]{>{\centering\let\newline\\\arraybackslash\hspace{0pt}}m{#1}}
\newcolumntype{R}[1]{>{\raggedleft\let\newline\\\arraybackslash\hspace{0pt}}m{#1}}
\title{The Information Content of Taster's Valuation \\ in Tea Auctions of India}
\author{Abhinandan Dalal\footnote{Student, Indian Statistical Institute Kolkata} , Diganta Mukherjee\footnote{Sampling and Official Statistics Unit, Indian Statistical Institute, Kolkata} \, and Subhrajyoty Roy\footnote{Student, Indian Statistical Institute Kolkata}   }
\date{\today}
\begin{document}

\maketitle

\begin{abstract}
    
    Tea Auctions across India occur as an ascending open auction, conducted online. Before the auction, a sample of the tea lot  is sent to potential bidders and a group of tea tasters. The seller’s reserve price is a confidential function of the tea taster’s valuation, which also possibly acts as a signal to the bidders.

    In this paper, we work with the dataset from a single tea auction house, J Thomas, of tea dust category, on 49 weeks in the time span of 2018-2019, with the following objectives in mind:
    \begin{itemize} 
    \item Objective classification of the various categories of tea dust (25) into a more manageable, and robust classification of the tea dust, based on source and grades.
    \item Predict which tea lots would be sold in the auction market, and a model for the final price conditioned on sale.
    \item To study the distribution of price and ratio of the sold tea auction lots.
    \item Make a detailed analysis of the information obtained from the tea taster's valuation and its impact on the final auction price.  
    \end{itemize}
    
    The model used has shown various promising results on cross-validation. The importance of valuation is firmly established through analysis of causal relationship between the valuation and the actual price. 
    The authors hope that this study of the properties and the detailed analysis of the role played by the various factors, would be significant in the decision making process for the players of the auction game, pave the way to remove the manual interference in an attempt to automate the auction procedure, and improve tea quality in markets.

\end{abstract}

\begin{center}
    \textbf{Keywords:} Tea Auction, Applied Econometrics, Auction, Common Value, Model prediction, Price Setting, Automation, Linear Regression
    
\end{center}
\pagebreak

\tableofcontents
\pagebreak

\section{Introduction}

Tea (Camellia sinensis) is a manufactured drink that is consumed across the world. The  tea  crop  has  rather specific agro-climatic requirements that are only available in tropical and  subtropical climates. Tea production, therefore, is geographically limited to a few areas around the world and is highly sensitive to changes in  growing  conditions. Majority of the tea producing  countries are located in the continent of Asia with China, India,  Kenya,  Sri  Lanka  and  Vietnam being  the top  producers  (in  that  order),  accounting  for around  78\% of the  world tea  production and 73\% of exports.  World  tea  production  is  estimated at over 5 million tonnes in 2015, valued around Rs 1 trillion. It increased by 4 percent to 5.2 million  tonnes  in  2014 (see \cite{FAO1} for further details).
China remains  the  largest  tea producing  country  with  an output of 2.1 million  tonnes in 2014, accounting for more than 40 percent of the world total, while production in India, the second largest producer, remained flat at 1.2 million tonnes in 2014, contributing  around  30\%  of  would  production.  

The  tea  industry  is  one  of  the  oldest  organized industries in India with a large network of tea producers, retailers, distributors, auctioneers, exporters and packers. Interestingly, India is also the world's largest consumer of black tea with the domestic market consuming around 1,000 million kg of tea during 2016. India’s annual production of tea is around 1,200 million kgs and the market size is estimated to be approximately Rs 20,000 Crore.
The  Tea  Industry  in  India  also  derives  its  importance  by  being  one of  the major  foreign  exchange earners and for playing a vital role towards employment generation as the industry is highly labour intensive. India exports around 225 million kg of tea and is the fourth largest exporter in the world with Russia being its largest importer. The annual value of tea exports from India is around \$800 million. The other major importers of Indian tea are Iran, UK, Pakistan and UAE. 

Tea is heterogeneous both over season and region, and even intra-region. Varieties of tea: 90\%  of  the tea  produced  in  India  is  of  CTC  (Cut, Tear and Curl) variety followed by Orthodox (9\%) and  Green  Tea  (1\%).  CTC  tea  is  largely  graded  as Broken  Leaf,  Dust  and  Fannings.  Each  of  these grades has a dozen of sub-grades based on the size of the grain etc.
Several factors influence the demand for tea, including the price and income variables, demographics such as age, education, occupation, and cultural background. Apart from consumption, other main drivers of international tea prices are trends and changes in per capita consumption, market access, the potential effects of pests and diseases on production, and changing dynamics between retailers, wholesalers and multinationals (Source: \cite{FAO1}). 

Tea demand is very price sensitive. Price elasticities for black tea vary between -0.32 and -0.80, which means that a 10 percent increase in black tea retail prices will lead to a decline in demand for black tea  between  3.2  percent and  8  percent,  according  to  FAO.  The average  weekly  volatility  of  CTC  grade prices in Siliguri is 7\%. Though the prices seem to be volatile, the tea prices show a specific pattern. The prices are at a peak as the new crop arrivals begin, with an increase in supplies, the prices witness a decline, lasting till the end of the season.

Due to the variation of quality, tea within the same grades are sold at a very wide range of prices in auctions. For example on any given auction day, say Broken Pekoe (BP) variety would sell at a range starting from as low as Rs 60  to Rs 250 per kg, depending on the producer mark (brand), quality and demand, making standardization difficult. Moreover, a given quality of tea is not available throughout the year. \footnote{The details regarding the tea industry has been collected from \cite{FAO1}, \cite{FAO2} and website \cite{FAO3}.}

{\bf Overview of e-auctions of tea:}
An e-auction is a primary marketing channel for selling tea to the highest bidder. The auction system serves two basic purposes. The first purpose is to facilitate price discovery by bringing the buyers and sellers  to  a  common  platform  with  broker’s intermediation. Buyers bid for lots of tea and each lot is sold to the winning bidder. The second purpose is that  the  auction  system  provides  a  guaranteed transaction  protocol  for  the  transaction.  The transaction includes activities such as delivery of tea to the warehouse, storing, sampling, bidding and payment (see \cite{FAO4} for a discussion). 

Since September 2016, the auctions are pan India. It means that a member registered with any tea trade association anywhere in India can directly participate in any e-auctions conducted by Tea Board. Earlier, the buyer registered with local tea association could only buy from the e-auctions taking place in respective centers. The  tea  auction  system  brings  the  buyers  and  sellers  together,  to  determine  the  price  through interactive competitive bidding on the basis of prior assessment of quality of tea. Manufactured tea is dispatched  from  various  gardens/  estates  to  the  auction  centres  for  sale  through  the  appointed auctioneers,  on  receipt  of  which,  the  warehouse  keeper  sends  an  arrival  a ‘weighment  report’ showing  the  date of  arrival  and  other  details  pertaining  to the tea  including  any  damage or short receipt from the carriers. 
The tea is catalogued on the basis of their arrival dates within the framework of the respective Tea Trade  Associations,  the  quantities  are  determined  according  to  the  rate  of  arrivals  at  a  particular auction  centres.  Registered  buyers,  representing  both  the  domestic  trade  and  exporters  receive samples  of  each  lot  of  teas  catalogued,  which  is  generally  distributed  a  week  ahead  of  each  sale enabling the buyers to taste, inform their principals and receive their orders well in time for sale. 
The auctioneers taste and value the tea for sale and these valuations are released to the traders. 
Guidelines for the price levels likely to be established when the tea is sold are formulated on the basis of these valuations and last sale price. J. Thomas \& Co. Pvt. Ltd is the largest auctioneer in the world, handling over 200 million kg of tea a year, which is one-third of all tea auctioned in India. 

Given the above complexities, the report aims to evaluate the feasibility of automating the pricing process to the extent of dispensing with the manual testing and valuation steps. The next section describes the data set we have used for our analysis. We first discuss the clustering exercise according to Grade and source in Section 3.  Section 4 discusses the pattern of volumes over different months of the year. The pattern of salability of different tea lots is investigated in section 5. Section 6 discusses the Price to value ratio to gain some insight into the pricing pattern which finally is used fully in the pricing models developed in Section 7. A detailed analysis of the value - price causality question is made in Section 8. The final comments on the feasibility of automation and future plans are discussed in section 9.

\section{Description of data}
In this report, we have used J-Thomas datasets on the weekly tea  details for Kolkata Dust tea, Orthodox tea details, CTC tea details and Darjeeling tea details. Moreover, we have the e-auction statistics as a part and parcel of the dataset. We have used the data in 2018 for modelling, as the training data, and the 2019 data has been used for cross-validation. We begin our initial modelling assuming that the model, conditioned on the relevant factors, does not depend on the year of the auction. We shall later see in the cross-validation procedure, that our predictions are quite satisfactory to assert that our assumption was not falsified.

The e-auction statistics (2018-19) consists of the name of tea leaf type, total lots offered in auctions, total amount sold in packets and in quantity, and average price. For each such tea leaf type, detailed info on the weekly sale, total amount sold in packets and in quantity, and average price has also been provided. An initial overview of the characteristics of the data at hand, has been produced in \cite{our_earlier}.

In the J-Thomas datasets, we find lot numbers (hence the difference  between the maximum and minimum lot number would give us the number of lots offered), the categorical variable: the grade of the tea,  number of packages offered, the valuation given by the agency, and finally the auction selling price. 

In our dataset, we have 25 types of tea grades available \cite{names}, namely, 
\begin{center} 
    \begin{tabular}{|c|c||c|c|}
        \hline
        & & & \\
        CD & Churamani Dust 
        & CD1 & Churamani Dust 1
        \\ CHD & $\cdots$ 
        & CHD1 & $\cdots$ 
        \\ CHU & $\cdots$ 
        & D & Dust 
        \\  D(F) & Dust Fine 
        & D(SPL) & Dust Special
        \\ D1 & Dust 1
        & D1(SPL) & Dust 1 Special
        \\ GTDUST & Golden Tea Dust
        & OCD & Orthodox Churamani Dust
        \\  OD & Orthodox Dust 
        & OD(S) & Orthodox Dust (S) 
        \\ OD1 & Orthodox Dust 1
        & OPD & Orthodox Pekoe Dust
        \\ OPD(Clonal) & Orthodox Pekoe Dust (Clonal)
        & OPD1 & Orthodox Pekoe Dust 1
        \\ ORD & Orthodox Red Dust 
        & PD & Pekoe Dust 
        \\ PD(FINE) & Pekoe Dust (Fine)
        & PD(SPL) & Pekoe Dust Special
        \\ PD1 & Pekoe Dust 1
        & PD1(SPL) & Pekoe Dust Special 
        \\ RD1 & Red Dust 1 & & \\
        & & & \\
        \hline
    \end{tabular}
\end{center}

However, there are only three main broad categories for tea dust: \textbf{D1, PD} and \textbf{PD1} according to Wikipedia \cite{wiki}. Hence our grades require clustering.\par
Out of the 49 weeks of data we possess, 38 weeks of auctions are from 2018, from the period of January to December. However, since not every week had an auction, we thus have data on weeks 2-9, 18, 20, 22-38, 40-41,43-52. Along with this, the remaining data is on the first 12 weeks in 2019, (except the 11th week, in which no auction took place), which we shall be using for cross-validation. Hence, based on our training set, we club together the tea grades for which there are less than 38 observations in the entire dataset (as we require at least one observation per week on the average.) This clubbing is done by its proximity to the other tea grades, where the proximity is based on the qualitative similarity of the grades. Thus we obtain the following 14 clubbed grades till now.

\begin{itemize}
    \item \textbf{CD}: Consisting of CD, CHD and CHU
    \item \textbf{CD1}: CD1 and CHD1
    \item \textbf{D}: D and D special
    \item \textbf{D(FINE)}
    \item \textbf{D1}: D1 amd D1 Special
    \item \textbf{OCD}
    \item \textbf{OD}: OD and OD-special
    \item \textbf{OD1}
    \item \textbf{OPD}:OPD, OPD-Clonal and ORD
    \item \textbf{OPD1}
    \item \textbf{PD}: PD and PD-Special
    \item \textbf{PD(FINE)}
    \item \textbf{PD1}: PD1 and PD1-Special
    \item \textbf{RD1}
    
\end{itemize}
Due to the only packet of GT Dust in the dataset, and that too remaining unsold, and further, due to its lack of immediate similarity from any of the existing tea grades, we conclude that GT Dust is a very rare category in this dataset, and hence we leave it out for the classification problem for now.

\section{Classification and Clustering}

\subsection{Clustering based on Grade}
The number of clusters so formed in the previous subsection is still quite large, and we suspect that they have quite an inherent similarity in their characteristics and hence in their market appeal and corresponding auction transactions. To have an idea about this, we form a dissimilarity matrix among the grades to visualize the measure of degree of similarity across the clusters.\par
We form the dissimilarity matrix based on the Volume Weighted Valuations and use the metric: $$\text{Dissimilarity} (d)=2(1-\rho^2)$$ where $\rho$ is the product moment correlation correlation coefficient between these time series of Volume Weighted median Valuations for different grades. Thus we create the dissimilarity matrices between these grades, and obtain Figure~\ref{tea_dis}, where the darker shade represents a larger value of dissimilarity, while a lighter shade shows that the clusters are similar.

\begin{figure} 
\centering
\includegraphics[width=0.6\textwidth]{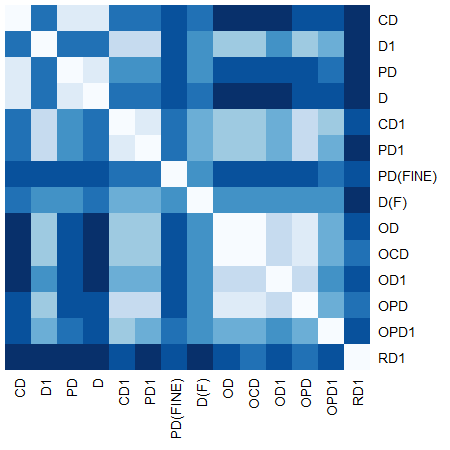}
\caption{Mosaic plot for dissimilarity based on correlation across clusters (Darker shades of blue indicates higher degree of dissimilarity)}
\label{tea_dis}
\end{figure}

\par One approach of clustering these is using hierarchical clustering based on this dissimilarity measure, to obtain the following clustering dendrogram in Figure \ref{dendo_dis}.
\begin{figure}
    \centering
    \includegraphics[width=\textwidth,trim={0 3cm 0 0},clip]{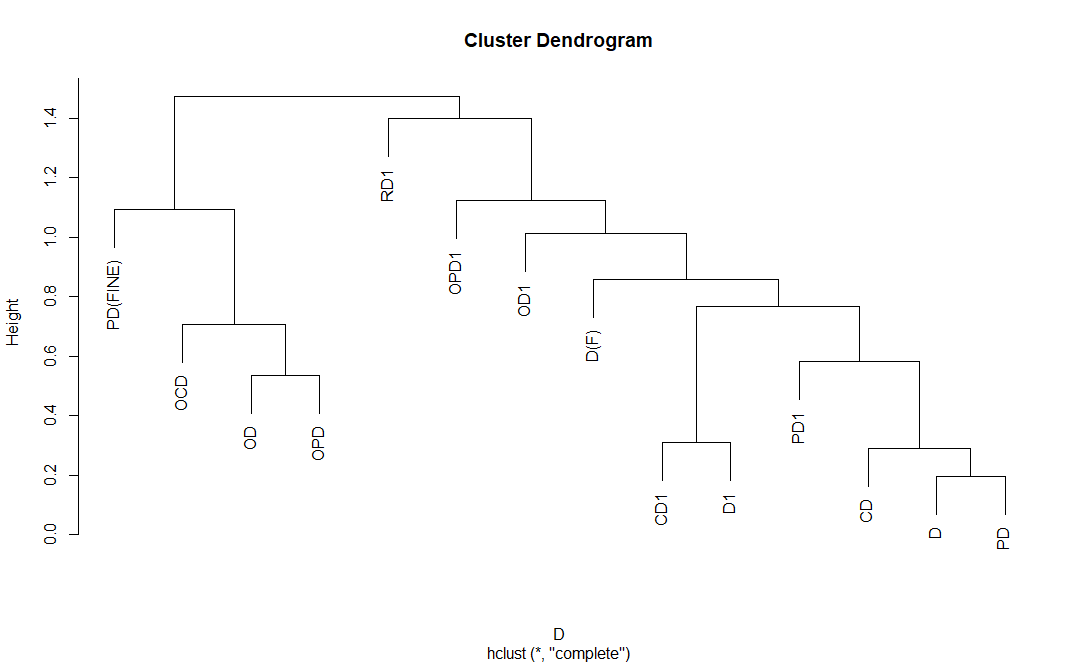}
    \caption{dendrogram for Volume Weighted Median Valuations based on 1-$\rho^2$}
    \label{dendo_dis}
\end{figure}

\par Analogously, considering Volume weighted mean valuations, we obtain Figure \ref{dendo_vol_dis}.

\begin{figure}
    \centering
    \includegraphics[width=\textwidth,trim={0 3cm 0 0},clip]{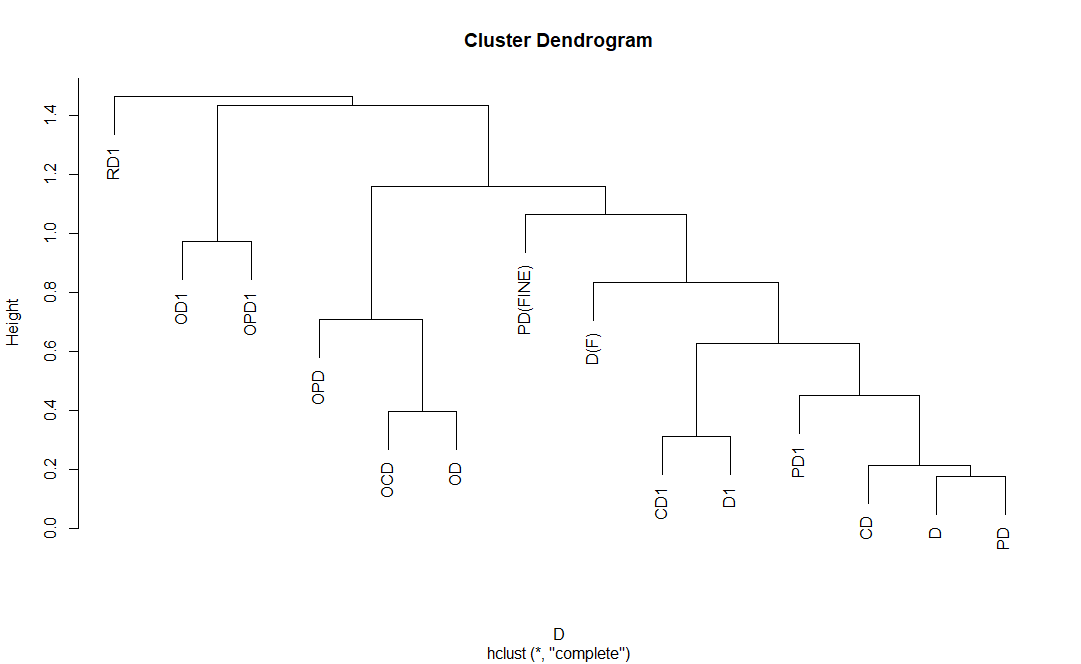}
    \caption{dendrogram for Volume Weighted Mean Valuation based on 1-$\rho^2$}
    \label{dendo_vol_dis}
\end{figure}

However, note that correlation is invariant to change of scale and origin, hence if a particular grade has even twice a valuation than another, using correlation as a measure of clustering would essentially nullify the effect of even twice the valuation, and would land the two grades into the same cluster, in spite of the fact that the grades are quite distinct in their market characteristics. \par 

Furthermore, we need to incorporate the idea that we have the data for multiple weeks, and the clustering should involve the clubbing for all the weeks combined.\par 

Thus we use the following idea: 
\begin{itemize}
    \item First we use \textbf{Bayesian Information Criteria} to figure out the number of clusters so that the model has the largest information. This figure came out to be $6$.
    \item Then, we used a Gaussian Mixture Model and ran the Expectation-Maximization Algorithm (often known in the literature as the EM-GMM clustering). We performed clustering both based on median as well as the mean, these two resulting in slightly different clusterings. We stick with the median based clustering due to its robustness.
    \item Both the price and the valuation was considered when applying EM-GMM clustering, in order to effectively capture the whole pattern present in the data.
\end{itemize}

Then, we define a new similarity structure, where the $i$-th and the $j$-th grade's similarity is proportional to the number of weeks they have occurred in the same cluster by EM-GMM method. Thus, we form a \textbf{similarity matrix}
with $(i,j)$-th entry of the matrix characterizing the similarity in terms of number of co-occurrence weeks. The Mosaic plot corresponding to these similarity matrices is given in Figure ~\ref{time_Mosaic_mean} and Figure~\ref{time_Mosaic_median}. Finally based on this similarity matrix, we conduct a hierarchical clustering to obtain the clusters, shown in Figure ~\ref{Median_cluster} and Figure ~\ref{Mean_cluster}.

\begin{figure}
    \centering
    \includegraphics[width=0.75\textwidth]{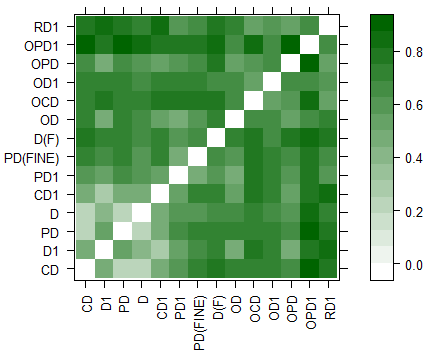}
    \caption{Mosaic plot for similarity matrix based on EM-GMM using Volume weighted Mean Price \& Valuations (Darker shades of green indicates higher degree of dissimilarity))}
    \label{time_Mosaic_mean}
\end{figure}

\begin{figure}
    \centering
    \includegraphics[width=0.75\textwidth]{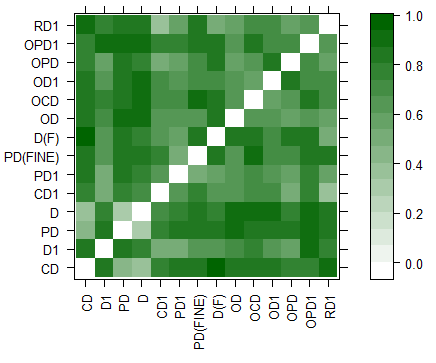}
    \caption{Mosaic plot for similarity matrix based on EM-GMM using Volume weighted Median Price \& Valuations (Darker shades of green indicates higher degree of dissimilarity)}
    \label{time_Mosaic_median}
\end{figure}

\begin{figure}
    \centering
    \includegraphics[width=\textwidth]{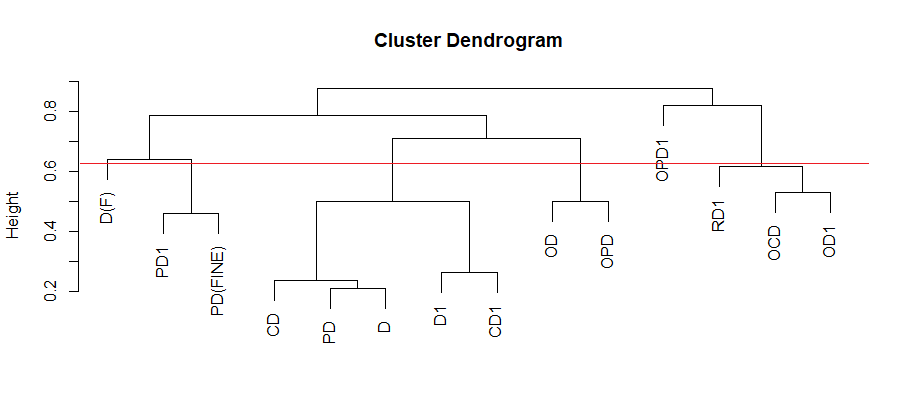}
    \caption{dendrogram for tea grades for volume weighted means by EM-GMM}
    \label{Mean_cluster}
\end{figure}

\begin{figure}
    \centering
    \includegraphics[width=\textwidth]{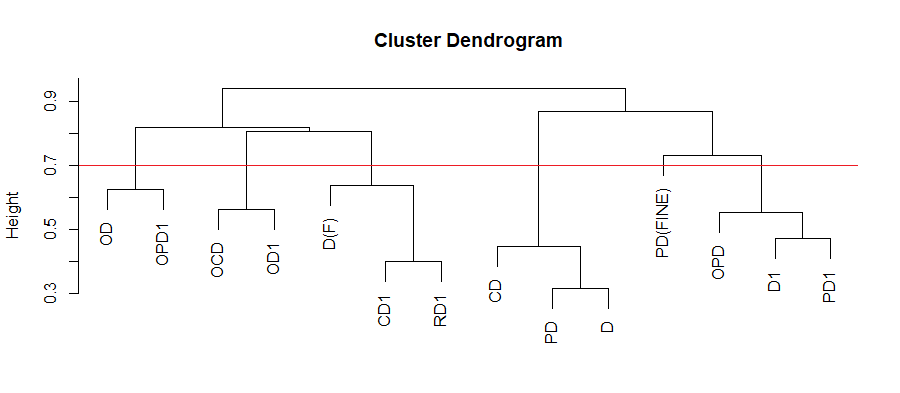}
    \caption{dendrogram for tea grades for volume weighted medians by EM-GMM}
    \label{Median_cluster}
\end{figure}

 Thus we obtain the following \textbf{6 clusters based on grade}:
\begin{itemize}
    \item \textbf{Cluster 1}: OD, OD-Special, OPD1 (3 grades)
    \item \textbf{Cluster 2}: OCD, OCD1, OD1 (3 grades)
    \item \textbf{Cluster 3}: D-Fine, CD1, CHD1, RD1 (4 grades)
    \item \textbf{Cluster 4}: D, D-Special, CD, CHD, CHU, PD, PD-Special (7 grades)
    \item \textbf{Cluster 5}: PD-Fine (1 grade)
    \item \textbf{Cluster 6}: OPD, OPD-Clonal, ORD, D1, D1-Special, PD1, PD1-Special (7 grades)
\end{itemize}
while the GT Dust category has been left out due to lack of sufficient data points.

\subsection{Diagnostics of Grade Clustering}
As mentioned before, according to Wikipedia \cite{wiki}, the three main categories for dust type tea leaf are \textbf{D1}, \textbf{PD} and \textbf{PD1}. However, our clustering algorithm shows that Tea Grade \textbf{D1} and \textbf{PD1} are similar in market characteristics by putting them into same cluster. On this note, we consider two different clusterings which might be possible.

\begin{enumerate}
    \item \textbf{Method 1:} The 6 clusters as mentioned above.
    \item \textbf{Method 2:} 8 clusters, cluster 1, 2, 3, 5 remaining as it is, while cluster 3 gets broken into two separate clusters, one containing grades PD, PD-Special and another containing CD, CHD, CHU, D, D-Special. Similarly, we divide cluster 6 into two separate clusters, one containing OPD, OPD-Clonal, ORD, D1, D1-Special and another containing PD1, PD1-Special.
\end{enumerate}

To find out which one of the above clusterings would be better for further analysis, we find out the proportion of total variation of both weekly price and weekly valuation which is explained by the clusters. Some of the obtained results for both of the clusterings are given in Table \ref{var-clusters}. It was found that, for both weekly price and weekly valuation, about 50\% variation of these variables over different lots are explained by the clusters alone. Also, using 8 clusters instead of 6 clusters increases this proportion of explained variation by at most 2\%, which is not significant in contrast to the loss of simplicity of subsequent works. This diagnostic suggests us to stick with the original clustering, with 6 clusters as defined previously.

\begin{table}
\centering
\caption{Proportion of Explained Variation of Weekly Price and Valuation by Clusters}
\label{var-clusters}
\begin{tabular}{|c|c|c|c|c|}
\hline
\multirow{3}{*}{Week} & \multicolumn{4}{c|}{Explained Proportion of Variation} \\
\cline{2-5}
& \multicolumn{2}{c|}{For Valuation} &\multicolumn{2}{c|}{For Price} \\
\cline {2-5} 
& with 6 clusters & with 8 clusters & with 6 clusters & with 8 clusters \\
\hline 
2    & 54.33\%  & 54.81\%  & 44.1\%  & 44.8\%   \\ \hline
3    & 56.79\%  & 57.28\%   & 44.2\%  & 44.5\%   \\ \hline
4    & 58.26\%  & 58.77\%  & 48.2\%  & 50.1\%   \\ \hline
\multicolumn{5}{|c|}{\vdots}    \\ \hline
45   & 63.93\%    & 65.27\%     & 60.2\%   & 61.4\% \\ \hline
46   & 56.84\%    & 57.09\%  & 51.4\%  & 51.6\%  \\ \hline
\end{tabular}
\end{table}

\subsection{Clustering by Source}
Now, note that, if we just cluster the tea based on their grades, then we are foregoing valuable information about the source from which the tea has been produced. This information would be very relevant for our subsequent discussions, and it would not be a good idea to get rid of it. Valuations about a tea grade depend on gardens they are originating from, as it utilizes the idea about the environment used for their nourishment, soil levels, and other significant factors. Hence the source of the tea dust is of utmost importance.\par
However, there are 238 tea gardens (293 including their Clonal, Royal, Gold and Special variants) from which the tea has originated, and again as above, we suspect that maintaining track of each of the tea gardens would be intractable as well as redundant, as the tea also show similarity in characteristics, a significant factor of which might be geographical proximity. Hence we undergo clustering based on the volume-weighted median to obtain dendrogram given in Figure ~\ref{source_medians}. These dendrogram has again been clustered using the EM-GMM algorithm and then the time-based similarity matrix as in the preceding section. \par 

\begin{figure}
    \centering
    \includegraphics[width=\textwidth]{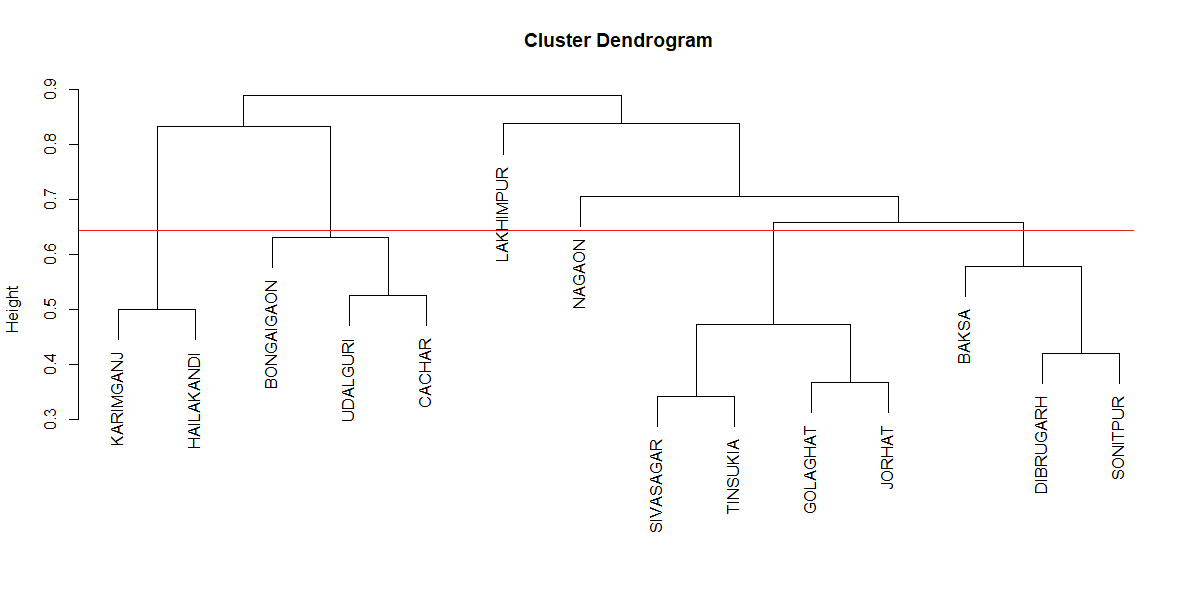}
    \caption{dendrogram for volume-weighted medians by source}
    \label{source_medians}
\end{figure}

\begin{figure}[ht]
    \centering
    \includegraphics[width=0.6\textwidth]{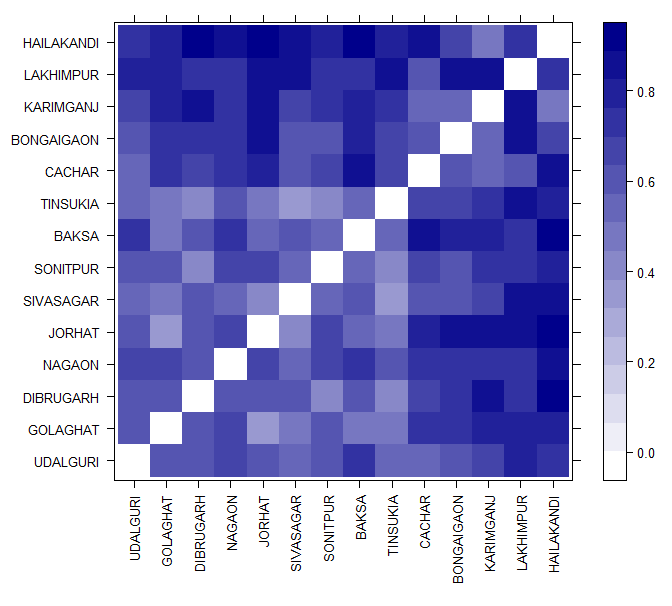}
    \caption{Mosaic plot for dissimilarity based on EM-GMM clustering for source (Darker shades for more dissimilarity)}
    \label{mosaic_source_EM}
\end{figure}

Before going into the final source-based clusterings, we need to keep the following things in mind as well: 
\begin{itemize}
    \item The districts for the source of the tea dust vary from the northern fringes of West Bengal to the entirety of Assam. Hence it might be a good idea to keep the districts of West Bengal and Assam well segregated.
    \item Geographical proximity might be a concern for the explanation of the districts that belong to the same cluster.
    \item Topography, soil structure, and water source may also be the reason for similar or dissimilar market characteristics of the tea dust.
    \item Most of the tea dust from West Bengal comes from Jalpaiguri, as portrayed in the data, while the other districts have a significantly lesser number of such packets to be sold. Hence it would be good if the districts of West Bengal are classified keeping this in mind. 
\end{itemize}
We attach here the maps of the tea-dust producing districts of West Bengal and Assam to give a visualization of the geographical proximities (in Figure ~\ref{wb_map} and Figure ~\ref{assam_map} ).\par 

\begin{figure}
    \centering
    \includegraphics[width=0.99\textwidth,trim={0.1cm 26cm 0.1cm 0.1cm},clip]{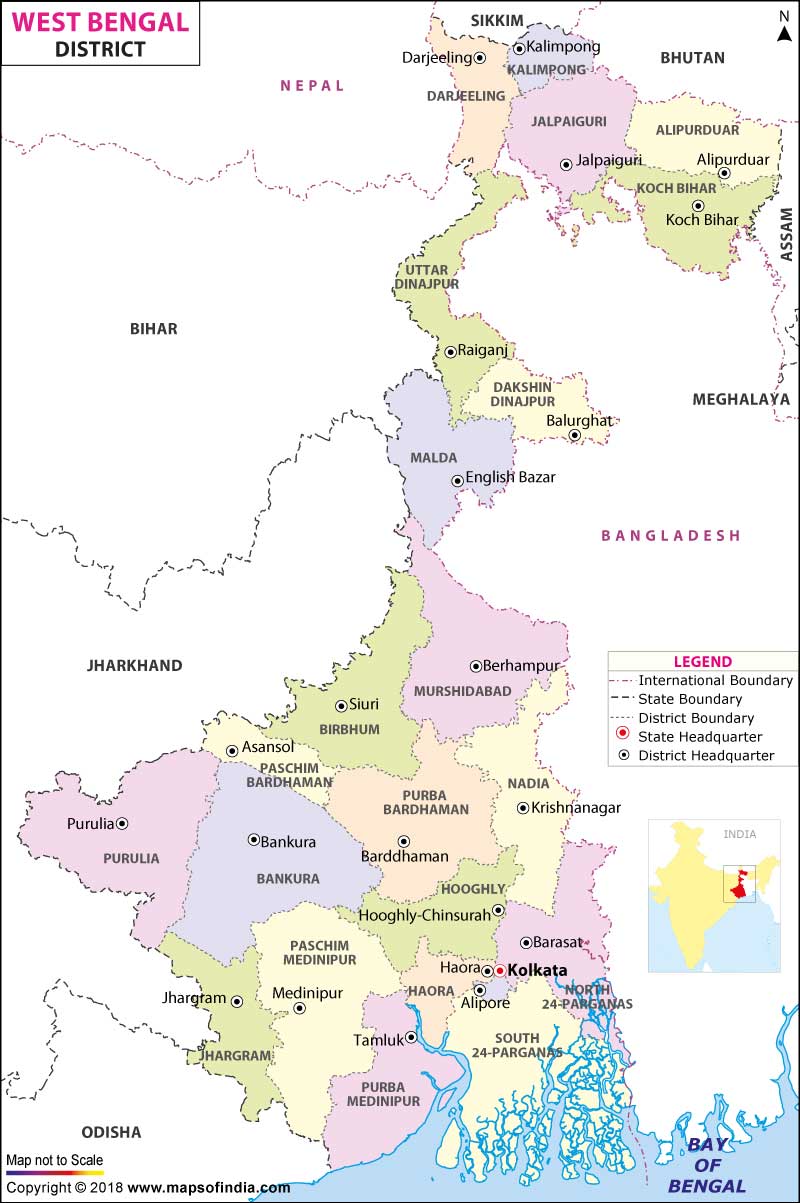}
    \caption{Map of North of West Bengal \cite{wb_map}}
    \label{wb_map}
\end{figure}

\begin{figure}
    \centering
    \includegraphics[width=0.99\textwidth,trim={0.15cm 0.65cm 0.15cm 0.65cm},clip]{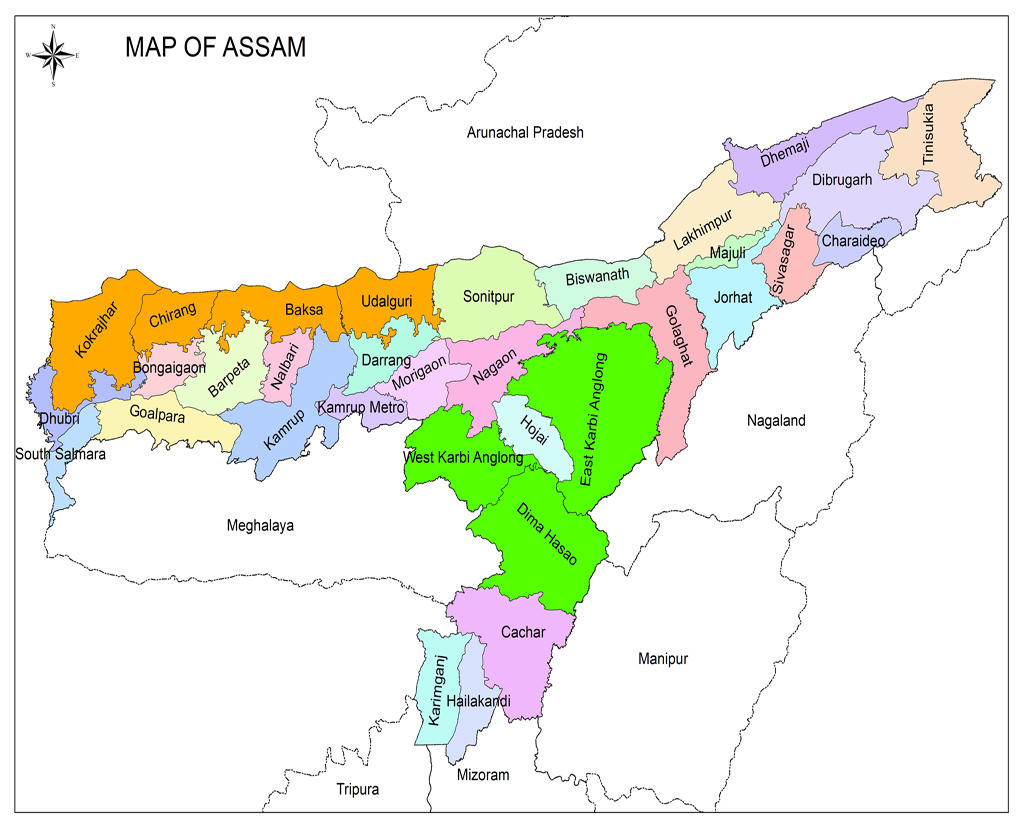}
    \caption{Map of Assam \cite{assam_map}}
    \label{assam_map}
\end{figure}

Thus we finally use the following 7 clusters based on the source of the tea dust.
\begin{itemize}
    \item \textbf{Cluster 1}: Darjeeling, Cooch Behar (Koch Behar), Uttar Dinajpur, Jalpaiguri (4 West Bengal districts)
    \item \textbf{Cluster 2}: Karimganj, Hailakandi 
    \item \textbf{Cluster 3}: Bongaigaon, Cachar, Udalguri, Darrang, Dima Hasao 
    \item \textbf{Cluster 4}: Lakhimpur
    \item \textbf{Cluster 5}: Nagaon
    \item \textbf{Cluster 6}: Sivasagar, Tinsukia, Golaghat, Jorhat
    \item \textbf{Cluster 7}: Baksa, Dibrugarh, Sonitpur
\end{itemize}

\section{Distribution of Volume over Different Months}

Once the clustering for tea grades and sources are obtained, the next idea would be to create a temporal clustering. The data for 2018, presented in the form of weeks, are  put into buckets of months. This is done to capture the seasonal variations in the market characteristics of tea dust. But again the exact week number might be too redundant an information, as the tea dust appearing in the market does not fluctuate as frequently as weeks, but might vary by seasons. Thus, to incorporate such a possibility of market dynamics, we calibrate the data of tea dust grades by months. This gives us the Table ~\ref{tea_months}. To obtain Table ~\ref{tea_months}, we find out the number of tea packets offered in a lot and multiply it with the net average weight of the tea packets to obtain the total amount (or volume) of tea offered. Then, for each cluster of grade, we find the proportion of its total volume which is offered during a specified month. This gives us a basis for clustering to analyze the supply side of market dynamics.

\begin{table}
    \centering
    \begin{tabular}{|c|c|c|c|c|c|c|}
    \hline 
    Clusters$\rightarrow$     &1  &2 &3 &4 &5 &6 \\
    \cline{2-7} Months $\downarrow$ & & & & & &\\
    \cline{1-1} 
    Jan &   0.099  &  0.134 &  0.105  &  0.056  &  0.038 &   0.100\\
    Feb &  0.070  &  0.125 &  0.082  &  0.037  &  0.000 &   0.085\\
    Mar &   0.008  &  0.018 &  0.019  &  0.009  &  0.000 &   0.016\\
    Apr &   0.000  &  0.000 &  0.000  &  0.000	&  0.000 &   0.000\\
    May &   0.027  &  0.010 &  0.008  &  0.031  &  0.018 &   0.015\\
    Jun &   0.108  &  0.052 &  0.069  &  0.077  &  0.053 &   0.059\\
    Jul &  0.129  &  0.091 &  0.100  &  0.125  &  0.025 &   0.098\\
    Aug &   0.115  &  0.109 &  0.114  &  0.129  &  0.109 &   0.115\\
    Sep &   0.139  &  0.155 &  0.123  &  0.138  &  0.177 &   0.128\\
    Oct &   0.051  &  0.062 &  0.089  &  0.090  &  0.184 &   0.096\\
    Nov &   0.084  &  0.086 &  0.132  &  0.129  &  0.191 &   0.126\\
    Dec &  0.171  &  0.156 &  0.160 & 0.179 &  0.205 &   0.162\\
    \hline 
    \end{tabular}
    \caption{Tea grade proportions by months \newline Figures denote ratio of volume of each cluster grade occurring in that month and the total volume of all packets of tea grade of that cluster in the dataset }
    \label{tea_months}
\end{table}

The mosaic plot for the above table has been included in Figure ~\ref{week_mosaic} for a better visualization.

\begin{figure}
    \centering
    \includegraphics[width=\textwidth]{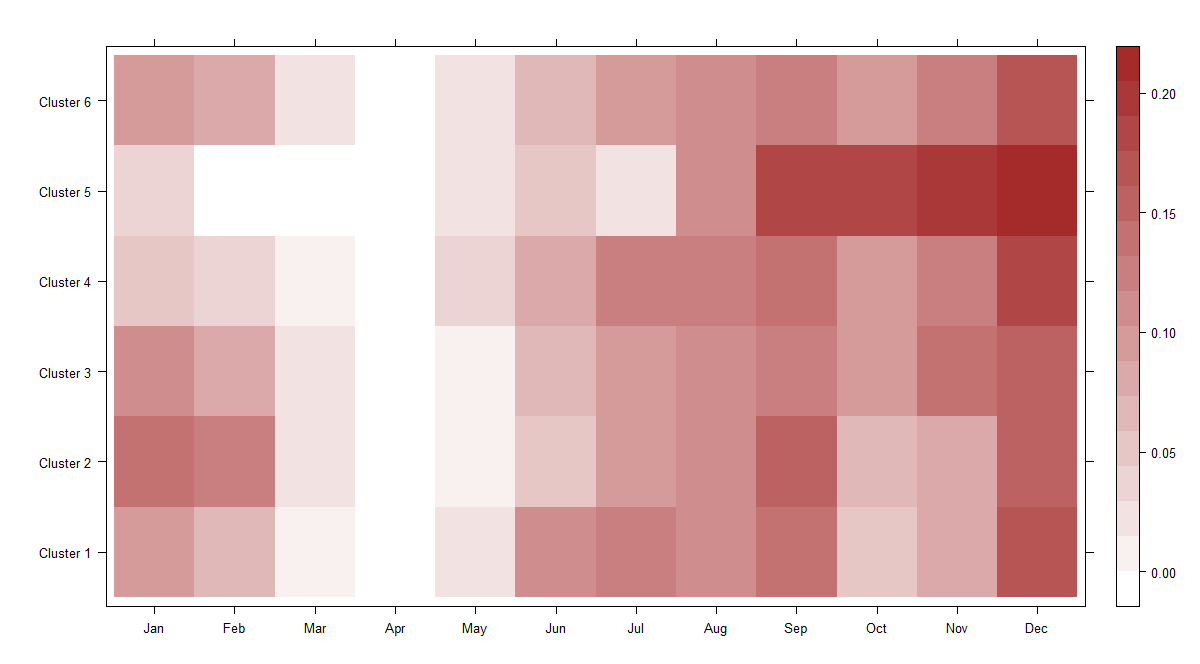}
    \caption{Mosaic plot for proportion of Volume of tea occurring for the grade clusters across months (Darker shades suggest larger proportions)}
    \label{week_mosaic}
\end{figure}

\section{Predicting Salability of Offered Tea lots}

For analyzing the auction of tea packets, we should concern ourselves with the proportion of tea packets to be sold, and relate its valuation and several other characteristics to it. A successful attempt at predicting the probability of being sold (or being unsold) of an incoming tea packet based on its Grade, Source, Valuation and the current month, would give an insight for automating the auction process, alongside enabling an opportunity to study the effect of Valuation on determining the market characteristics of tea grades.

\subsection{A Primary Inspection}
Primary inspection is made to see whether any particular type of tea grades are more likely to be sold at the auction than other types. Table ~\ref{sold-prop-grade} shows the proportion of tea lots being sold and the total number of tea lots offered across different grades.

\begin{table}
\centering
\caption{Proportion of Tea Lots being Sold across different Grades}
\label{sold-prop-grade}
\begin{tabular}{|c|c|c|c|c|c|}
\hline
Grade  & \begin{tabular}[c]{@{}c@{}}Total Lots\\ Offered\end{tabular} & \begin{tabular}[c]{@{}c@{}}Proportion \\ of Sold\end{tabular} & Grade         & \begin{tabular}[c]{@{}c@{}}Total Lots\\ Offered\end{tabular} & \begin{tabular}[c]{@{}c@{}}Proportion\\ of Sold\end{tabular} \\ \hline
CD & 1720  & 0.768 & OD (Special)  & 3   & 0.333 \\ \hline
CD1 & 1400  & 0.819 & OD1   & 125 & 0.896 \\ \hline
CHD & 5 & 0.8 & OPD & 1064 & 0.822 \\ \hline
CHD1 & 23 & 0.696 & OPD (Clonal)  & 3  & 1  \\ \hline
CHU & 5 & 1 & OPD1  & 61 & 0.754 \\ \hline
D & 7932 & 0.816 & ORD  & 8 & 1 \\ \hline
D (Fine) & 216 & 0.842 & PD & 7119 & 0.771 \\ \hline
D (Special)  & 6 & 1 & PD (Fine) & 111 & 0.945 \\ \hline
D1 & 3477 & 0.842 & PD (Special)  & 23 & 0.609 \\ \hline
D1 (Special) & 1 & 1 & PD1  & 500 & 0.872 \\ \hline
OCD & 206 & 0.888 & PD1 (Special) & 5 & 0.8 \\ \hline
OD & 2229 & 0.796 & RD1  & 77 & 0.987 \\ \hline
\end{tabular}
\end{table}

Note that, Table ~\ref{sold-prop-grade} shows that Fine variant of a tea grade is offered more rarely than its original variant, and its probability of getting sold at the auction also increases. Also, the Special variant of any tea grade is rarer to be offered than its Fine variant, and for this reason, there is not a significant number of observations to conclude whether it increases or decreases the probability of getting sold at the auction.

To find out whether the probability of being sold significantly depends on the variant of tea grades, we perform a simple one-way Analysis of Variance model with the indicator of being sold as the response variable and the type of variant as our treatment variable. We obtain the results as shown in Table \ref{anova-tea-grade}. Clearly, the probability of being sold for Fine variants of tea grades are higher than regular variant, and the p-value is small indicating that there are a significant number of observations to support this. However, the proportion of being sold is possibly lower in Special variants than in Regular ones, but higher p-value indicates that there is not enough evidence to support this.

\begin{table}
\centering
\caption{Output for Analysis of Variance of Sellablity on variant of Tea Grades}
\label{anova-tea-grade}
\begin{tabular}{|c|c|c|c|c|}
\hline
Coefficients & Estimate & Std. Error & t statistic & p value \\ \hline
Regular (Intercept) & 0.808    & 0.002721   & 297.20     &  $2\times 10^{-16}$  \\ \hline
Fine                & 0.077    & 0.025131   & 3.10        & 0.00194 \\ \hline
Special             & -0.098   & 0.063752   & -1.54       & 0.12347 \\ \hline
\end{tabular}
\end{table}

Similar to this, we tried to find out whether different variants of Source (for example, Clonal, Gold, Royal, etc.) affect the probability of getting sold. Again we perform an Analysis of Variance model, however, with the variant of the tea garden (or Source) as our treatment effect. We obtain the results as shown in Table ~\ref{anova-tea-source}. From this, we note that if the tea packet has come from a Clonal tea garden, its selling probability is expected to be higher than Regular ones by 0.044, and the smaller value of p-value indicates evidence to support this claim. Similarly, the tea packets produced from the Gold type variant of Garden is expected to be 15\% less probable to be sold at the auction. Also, with 95\% confidence, we can say that the tea packets produced from the Royal type variant of Garden are 22\% less likely to be sold.

\begin{table}
\centering
\caption{Output for Analysis of Variance of Sellablity on variant of Tea Gardens}
\label{anova-tea-source}
\begin{tabular}{|c|c|c|c|c|}
\hline
Coefficients        & Estimate & Std. Error & t statistic & p value \\ \hline
Regular (Intercept) & 0.808    & 0.002881   & 280.653     & $2\times 10^{-16}$  \\ \hline
Clonal              & 0.044    & 0.010272   & 4.259       & $2.06\times 10^{-5}$ \\ \hline
Gold                & -0.151   & 0.032931   & -4.588      & $4.05\times 10^{-6}$ \\ \hline
Royal               & -0.225   & 0.113282   & -1.987      & 0.0469  \\ \hline
Special             & -0.027   & 0.013983   & -1.96       & 0.05    \\ \hline
\end{tabular}
\end{table}

\subsection{Model for Prediction}

To build a predictive model to predict whether a tea packet will be sold at the auction or not, based on its Valuation, Grade, Source, and the current time, we use three competing models.

\begin{enumerate}
    \item Logistic Regression
    \item Generalized Additive Model
    \item Mixture of Logistic Regression
\end{enumerate}

We divide the total dataset of the year 2018 into training and cross-validation sets, with the training set containing 70\% of the samples. The cross-validation set is used to select the model. The dataset of the year 2019 is kept as a testing set, which is used to evaluate the performance of the finally selected model for prediction. The percentage of sold tea lots is kept almost similar for both the training and testing sets about 81\%. For each of the sets, we consider each combination of Valuation, Grade, Source, and Month, and obtain the proportion of sold tea lots among all tea lots offered under that combination. On the other hand, the predicted probability of being sold under that combination is estimated from the trained model. Both of these probabilities are visually and analytically compared against each other to assess the performance of the trained model for both sets. For analytical comparison, we use three measures as follows;

\begin{enumerate}
    \item \textbf{Null and Residual Deviance}: For a generalized linear model, $$\underbrace{\text{Null deviance}}_{\text{df}}=\underbrace{2\left(\Lambda(\text{Saturated Model})-\Lambda(\text{Null Model})\right)}_{\text{df(Saturated Model) - df(Null Model) }}$$ and $$\underbrace{\text{Residual deviance}}_{\text{df}}=\underbrace{2\left(\Lambda(\text{Saturated Model})-\Lambda(\text{Proposed Model})\right)}_{\text{df(Saturated Model) - df(Proposed Model) }}$$
    where $\Lambda(\cdot)$ stands for the log-likelihood under the model in the argument. \par
    The saturated model is characterized by each data point bringing in its new parameter, ie, it provides no further scope for the addition of parameters corresponding to the data points. Thus all $n$ parameters are to be estimated. \par The null model is the exact antipodal in the sense that it assumes a common parameter for all the data points, and thus only one parameter needs to be estimated. The proposed model is the model one proposes, with $p+1$ parameters.\par 
    A small null or residual deviance indicates that the corresponding one or p+1 parameter model explains the data well. Formally, under a truly good fit, the deviances follow a chi-squared distribution with the mentioned degrees of freedom.
    \item \textbf{RMSE:} Root Mean Squared Error measures the $L_2$ distance between the predicted probabilities and the true probabilities. Hence smaller values are preferred.
    \item \textbf{MAE:} Mean Absolute Error measures the $L_1$ distance between the predicted and true probabilities. Hence smaller values are preferred.
\end{enumerate}

\subsection{Logistic Regression}

We began with attempting to fit a logistic regression model with the full data. The following results came up:
\begin{itemize}
    \item Null deviance: 18055 on 18381 degrees of freedom.
    \item Residual deviance: 17813  on 18359  degrees of freedom.
    \item Null deviance - residual deviance = 242 with degrees of freedom 22.
    \item Pseudo $R^2$: 0.01340349.
    \item RMSE is 0.295 for training set and 0.312 for cross-validation set.
    \item MAE is 0.214 for training set and 0.235 for cross-validation set.
    \item Figure ~\ref{logit_fit} shows how bad the fit is for training and testing sets respectively.
\end{itemize}

\begin{figure}
    \centering
    \includegraphics[width=\textwidth]{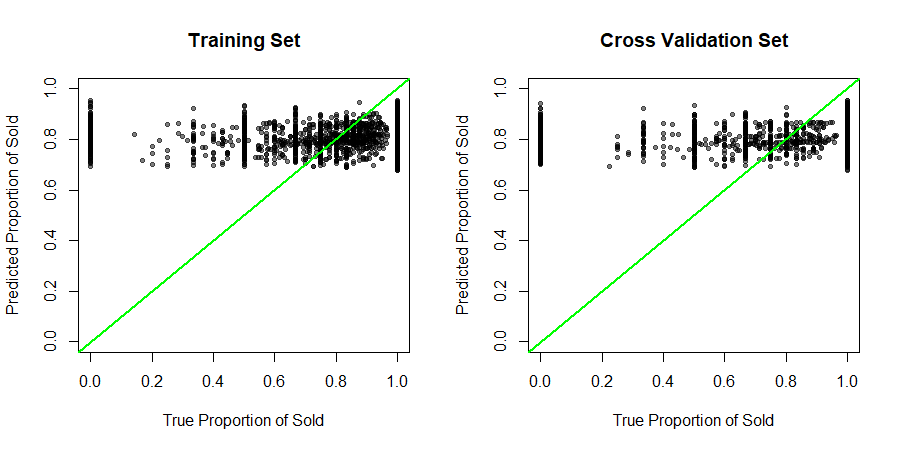}
    \caption{Fit of logistic regression for predicting sold grades for training and testing sets}
    \label{logit_fit}
\end{figure}

Hence this model fails miserably in predicting whether a packet would be sold.

\subsection{Generalized Additive Model}

We use a Generalized Additive Model with a binomial family, with a smooth cubic spline fitted on the Valuation of tea packets as a predictor. The following results came up:
\begin{itemize}
    \item Null deviance: 18055 on 18381 degrees of freedom.
    \item Residual deviance: 17627  on 18357  degrees of freedom.
    \item Null deviance - residual deviance = 429 with degrees of freedom 24.
    \item Pseudo $R^2$: 0.02373118.
    \item RMSE is 0.287 for the training set and 0.306 for the cross-validation set.
    \item MAE is 0.21 for the training set and 0.2322 for the cross-validation set.
    \item Figure ~\ref{gam_fit} shows some improvement over logistic regression, but still, the fit remains too bad to be of any use.
\end{itemize}

\begin{figure}
    \centering
    \includegraphics[width=\textwidth]{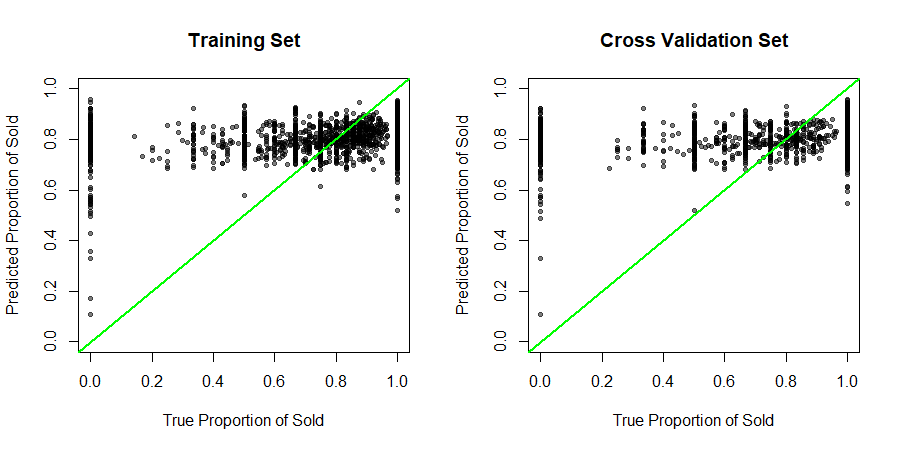}
    \caption{Fit of Generalized Additive Model for predicting sold grades for training and testing sets}
    \label{gam_fit}
\end{figure}

\subsection{Mixture of Logistic Regression}
We try using a mixture of logistic regressions to predict the selling potential of the packets. The model, in general, for a mixture of $S$ components, is given by \cite{logit_mix} (and extended in \cite{logit_mix_2}, \cite{logit_mix_3})
\[ H(y|T,\mathbf{x},\mathbf{w},\mathbf{\Theta})=\sum_{s=1}^S \pi_s(\mathbf{w},\mathbf{\alpha})\text{Bi}(y|T,\theta_s(x)) \]
where $\mathbf{w}$ stands for the concomitant variables on which the mixing proportions $\pi_s$ depend, $\text{Bi}(y|T,\theta_s(\textbf{x}))$ is the binomial distribution with number of trials equal to $T$ and success probability $\theta_s\in(0,1)$, is modelled by usual logistic modelling; $\text{logit} \left(\theta_s(\mathbf{x})\right)=\mathbf{x}^\top \mathbf{\beta}^s$. The concomitant variable is assumed to have a multinomial logit model, i.e. of the form $$\pi_s(\mathbf{w},\mathbf{\alpha})=\dfrac{e^{\mathbf{w}^\top\mathbf{\alpha_s}}}{\sum_{u=1}^S e^{\mathbf{w}^\top\mathbf{\alpha_u}}} \ \forall s$$

Here, we have used our independent variable $x$ to be the corresponding valuations, $T$ being the number of packets arriving, and success denoting the event that the packet is sold. The concomitant variables are the source, grade cluster and month of the packets.\par 

Here we consider 2 to 5 component mixture for this. Based on the Bayesian Information Criterion, the 3 component mixture of logistic regression is chosen which yields a BIC value 6183.014.

\begin{itemize}
    \item RMSE is 0.222 for the training set and 0.2488 for the cross-validation set.
    \item MAE is 0.1522 for the training set and 0.1811 for the cross-validation set.
    \item Figure ~\ref{mix_logit_fit} provides the plots for fits, which shows significant improvement over the previous models. Figure ~\ref{rootogram} gives the rootogram of the components, and the peaks at 0 and 1 suggest how well these components can be identified. Note that, the observations assigned to component 1 are marked in the rootogram. The peak at 1 for component 1 shows that this component is well separated from others.
    \item Taking into consideration the metrics for fit as well as the plots, we finalize this model to predict the probability of a particular packet of tea being sold, given its corresponding covariates.
    \item As we decided to use this as our final prediction model, we can update its parameters based on all of the training and cross-validation set. Finally, we evaluate its performance on the testing set (comprises of data in 2019).
    \begin{itemize}
        \item Updated model achieves RMSE of 0.214 and 0.2519 in the whole training set and testing set respectively.
        \item This achieves an MAE of 0.146 and 0.1684 in the whole training set and testing set respectively.
        \item Figure ~\ref{mix_logit_fit_final} provides the plots for the fitted model for both training and testing sets.
        \item Table ~\ref{mix-logit-summary} provides the summary of this model.
    \end{itemize}
\end{itemize}

\begin{figure}
    \centering
    \includegraphics[width=\textwidth]{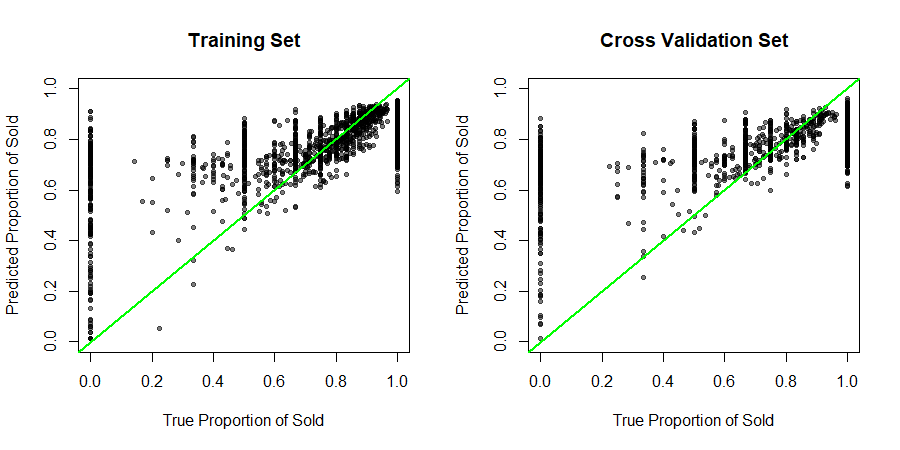}
    \caption{Fit of 3 component mixture of logistic regression for predicting sold grades for training and cross-validation sets}
    \label{mix_logit_fit}
\end{figure}

\begin{figure}
    \centering
    \includegraphics[width=\textwidth]{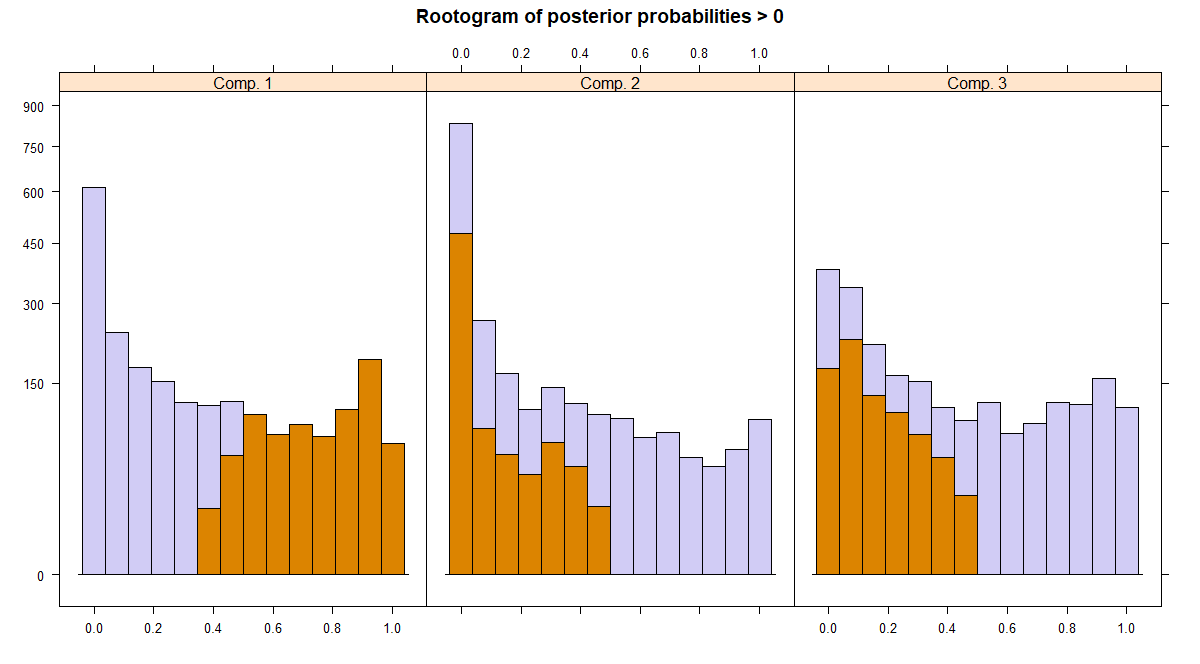}
    \caption{Rootogram of posterior probabilities obtained from fitted model (Component 1 is marked)}
    \label{rootogram}
\end{figure}

\begin{figure}
    \centering
    \includegraphics[width=\textwidth]{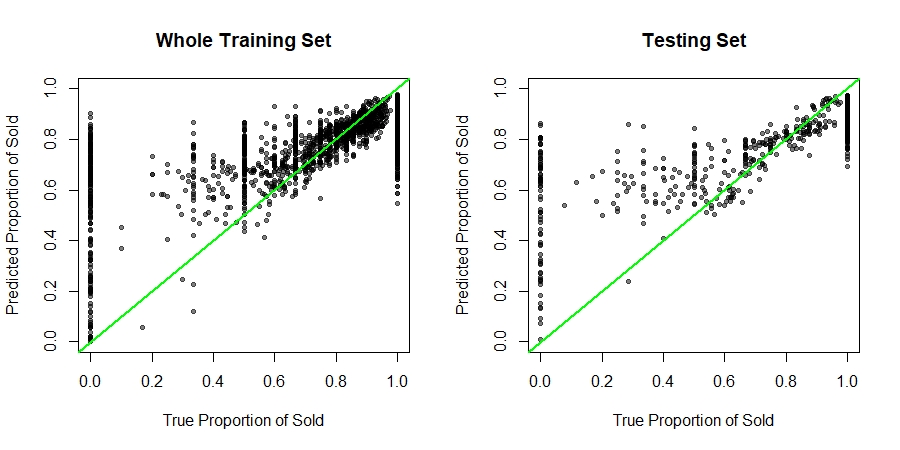}
    \caption{Updated Fitted model of 3 component mixture of logistic regression for predicting sold grades for whole training and training sets}
    \label{mix_logit_fit_final}
\end{figure}

\begin{table}
\centering
\caption {Summary of Mixture of Logistic Regression Fitted Model}
\label{mix-logit-summary}
\begin{tabular}{|c|c|c|c|c|c|c|}
\hline
\multirow{2}{*}{} & \multicolumn{2}{c|}{Component 1} & \multicolumn{2}{c|}{Component 2} & \multicolumn{2}{c|}{Component 3} \\ \cline{2-7} 
    & Intercept       & Valuation      & Intercept       & Valuation      & Intercept       & Valuation      \\ \hline
Estimate          & 0.5355          & 0.007586      & -1.48          & 0.0117       & 7.2017         & -0.03344        \\ \hline
Std. Error        & 0.1622          & 0.00092       & 0.1819          & 0.00092        & 0.3329         & 0.001882        \\ \hline
Z Value           & 3.3020          & 8.2702         & -8.1369         & 12.7798        & 21.635         & -17.772         \\ \hline
p value           & 0.0009           & 2.2e-16          & 4.05e-16         & 2.2e-16          & 2.2e-16           & 2.2e-16        \\ \hline
\end{tabular}
\end{table}

\section{Distribution of price to valuation ratio for grade clusters}
Valuation by experts does provide a significant knowledge of what the final price of the transaction would be. The entire process runs with the base price being set at some fixed proportion of the valuations, and thus the following price of transaction revolves significantly across this measure. In this section, we would attempt to fit distributions over the ratio of price and valuations to account for its variability and shape of distribution curves. We have attempted this exercise with the natural logarithm of the ratio of price and volume, to have full support over the real numbers. \par 
Observe that if, $\log\left(\dfrac{X}{Y}\right)\sim N(\mu,\sigma^2)$, then $$\mathbb{E}\left(\dfrac{X}{Y}\right)=\exp\left(\mu+\dfrac{\sigma^2}{2}\right)$$ and $$\Var\left(\dfrac{X}{Y}\right)=\exp\left(2\mu+\sigma^2\right)\left(e^{\sigma^2}-1\right)$$
The histograms of the ratio of price and valuations for various grade clusters as obtained from the data are given in Figure ~\ref{p/v_dist1} and Figure ~\ref{p/v_dist2}.

\begin{figure}
    \centering
    \includegraphics[width=\textwidth]{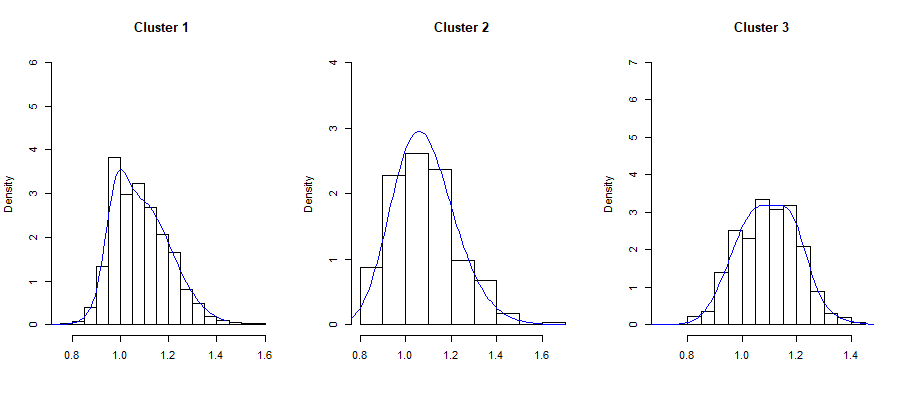}
    \caption{Histogram of Price/Value Ratio and Fitted Distributions for Clusters 1,2 and 3}
    \label{p/v_dist1}
\end{figure}

\begin{figure}
    \centering
    \includegraphics[width=\textwidth]{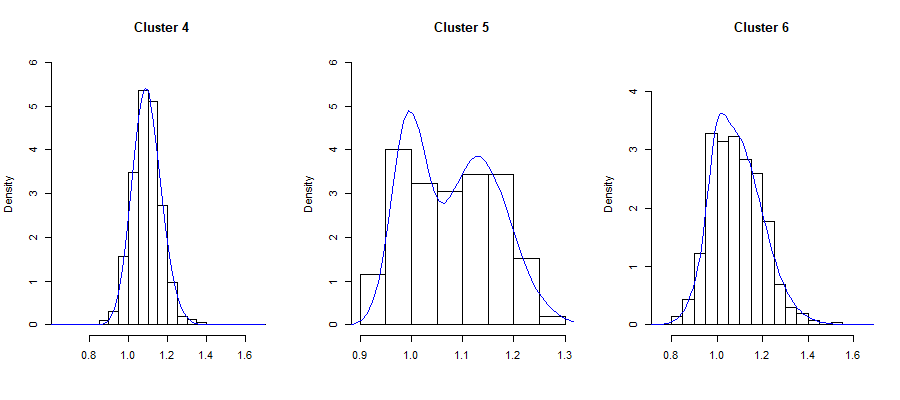}
    \caption{Histogram of Price/Value Ratio and Fitted Distributions for Clusters 4, 5 and 6}
    \label{p/v_dist2}
\end{figure}

\subsection{For Cluster 1 (OD, OD-Special, OPD1)}
We have attempted fitting a single normal distribution as suggested by the histogram. However, this results in a poor fitting of the data. Hence, we attempt to fit a mixture of two normal distributions to the data. The chi-squared goodness of fit yielded a p-value of 0.2274 and the Kolmogorov-Smirnov statistic yielded a p-value of 0.1588, which is quite satisfactory for our case. Thus we report the distribution of the ratio of price and valuation of Cluster 1 to be a \textbf{mixture of two log-normal distributions} with the following properties
\begin{center} 
\begin{tabular}{|c|c|c|c|c|}
    \hline
    Proportion & $\mu$ & $\sigma$ & Expectation & Variance\\
    \hline
    0.1909724 & -0.01608057 & 0.04471298 & 0.985 & 0.00194179\\
    0.8090276 & 0.09976462 & 0.10636915 & 1.1111792 & 0.01404943\\
    \hline 
\end{tabular}
\end{center}

\subsection{For Cluster 2 (OCD, OCD1, OD1)}
The histogram shows a single modal distribution. We tried fitting a single lognormal distribution, our p-values for  the chi-square goodness of fit statistic came out as 0.2665 and for One sample Kolmogorov Smirnov test came out as 0.6908, which suggests a reasonably good fit. Thus we report the ratio for this cluster to follow a \textbf{an unimodal log-normal distribution} with the following properties.
\begin{center}
    \begin{tabular}{|c|c|c|c|}
    \hline
    $\mu$ & $\sigma$ & Expectation & Variance\\
    \hline
    0.07285774 & 0.12655414 & 1.08422529 & 0.01897904 \\
    \hline
    \end{tabular}
\end{center}

\subsection{For Cluster 3 (D-Fine, CD1, CHD1, RD1)}
The same pattern follows as in Cluster 2, unimodal from histogram, but a single log-normal fits badly (p-value 2e-04). But fit with mixture of two log-normals give reasonably well fits (p-value 0.3855). Thus we report a \textbf{mixture of two log-normal distributions} with the following properties.
\begin{center}
    \begin{tabular}{|c|c|c|c|c|}
    \hline 
    Proportion & $\mu$ & $\sigma$ & Expectation & Variance\\
    \hline 
    0.8564837 & 0.07299101 & 0.10194329 & 1.081325 & 0.012214861 \\
     0.1435163 & 0.17518446 & 0.04294408 & 1.192565 & 0.002526254 \\
    \hline
    \end{tabular}
\end{center}

\subsection{For Cluster 4 (D, D-Special, CD, CHD etc.)}
The histogram shows a unimodal and highly leptokurtic structure, thus we attempt to fit a single log-normal distribution. However, it fits badly. A mixture of two log-normal distributions fits reasonably better, yielding a p-value of 0.4998 for Pearsonian goodness of fit test, and a p-value of 0.324 for Kolmogorov Smirnov test. Hence we report a \textbf{mixture of two log-normal distributions} with the following properties. The closeness of the means and smaller variances in both the components explain the reason for a unimodal looking structure in the histogram.

\begin{center}
    \begin{tabular}{|c|c|c|c|c|}
    \hline 
    Proportion & $\mu$ & $\sigma$ & Expectation & Variance\\
    \hline 
    0.8944309 & 0.08910557 & 0.05939173 & 1.095126 & 0.004237857 \\
    0.1055691 & 0.08829919 & 0.11626259 & 1.099722 & 0.016458284 \\
    \hline 
    \end{tabular}
\end{center}

\subsection{For Cluster 5 (PD-Fine)}
In this case, as the histogram shows a bimodal shape, we try fitting with a 2 component mixture of log-normal distribution. It gives reasonably well p-value, 0.8231 for Pearson's chi-sqaured goodness of fit test and 0.9891 for Kolmogorov Smirnov's test. Hence we report the distribution of the ratio to be a \textbf{mixture of two log-normal distributions} with the following parameters. 
\begin{center}
    \begin{tabular}{|c|c|c|c|c|}
    \hline
    Proportion & $\mu$ & $\sigma$ & Expectation & Variance\\
    \hline 
    0.4000716 & -0.004682854 & 0.03456505 & 0.9959229 & 0.001185728 \\
    0.5999284 & 0.125364986 & 0.05488756 & 1.1352709 & 0.00388671\\
    \hline 
    \end{tabular}
\end{center}

\subsection{For Cluster 6 (OPD, OPD-Clonal, ORD, D1 etc.)}
The histogram for this cluster is characterized by its unimodality and slight positive skewness. A single log-normal yielded a p-value of 0.0104. Hence, we moved on to a mixture of two log-normal distributions, and this time, the p-value came out to be a higher value of 0.06225. We also tried to fit a three-component mixture of log-normal distributions, however, that does not increase the p-value by a significant amount. Thus we report a \textbf{mixture of two log-normal distributions} with the following properties, to maintain the simplicity of the underlying model.

\begin{center}
    \begin{tabular}{|c|c|c|c|c|}
    \hline
    Proportion & $\mu$ & $\sigma$ & Expectation & Variance\\
    \hline
    0.09208835 & -0.00104782 & 0.03631677 & 0.9996117 &  0.001318753\\
    0.90791165 &  0.08659977 & 0.10302695 & 1.0962629 & 0.012824430\\
    \hline
    \end{tabular}
\end{center}

\section{Analysis of the Price Model}

To model the pricing system of the tea market, we consider modeling the demand side by consideration of the Valuation of tea grades, its grade, source, and the month in which the tea lots are available. On the other hand, to model the supply side of the market, we consider the volume of the tea lots as our main predictor. Therefore, our pricing model should include these variables.

To check whether the variant of the tea gardens (Clonal, Gold, Royal, Special, etc.) should be included in the pricing model, we simply fit a one-way Analysis of Variance model with Price as our response variable and the variant of tea garden as the possible treatment variable. It is found that this factor explains a sum of squares of 842405 with 4 degrees of freedom, yielding an F-statistic value of 102.41 and consequently extremely small p-value. Therefore, based on the data, we find sufficient evidence to incorporate this factor into our pricing model in order to have better predictability.

\subsection{Nonparametric Exploration of limitations of a Pricing Model}

Before specifying a statistical model for the prediction of Valuation and Price of a tea packet based on several of its characteristics, it is extremely important to understand the limitations of how much we can do first. To this end, it is found that there may be several tea packets with exactly the same characteristics, (coming from the same garden, is of the same grade and comes in the same week), and between them, the Range of their Valuations and Prices are calculated. If we consider the empirical CDF of such all possible ranges, then as seen from Figure \ref{fig:ecdf_range}, in most cases, those packets are subjected to exactly the same Valuation by the auctioneers, however, the final Prices at they are sold may be very different. This exploration suggests that if we simply do away with Valuation and output a single prediction of Price for tea packets based on its Grade, Source, and Week of the year when it is held for the auction, we can almost hope for $55\%$ accuracy in prediction. However, if we wish to have a $90\%$ accuracy in predicting price, we must allow approximately $30$ rupees of deviation from the actual price. Even with a more robust measure of dispersion, mean deviation about median, the conclusion of this exploration remains the same, as seen from Figure \ref{fig:ecdf_md}.

\begin{figure}[ht]
    \centering
    \includegraphics[width = \textwidth]{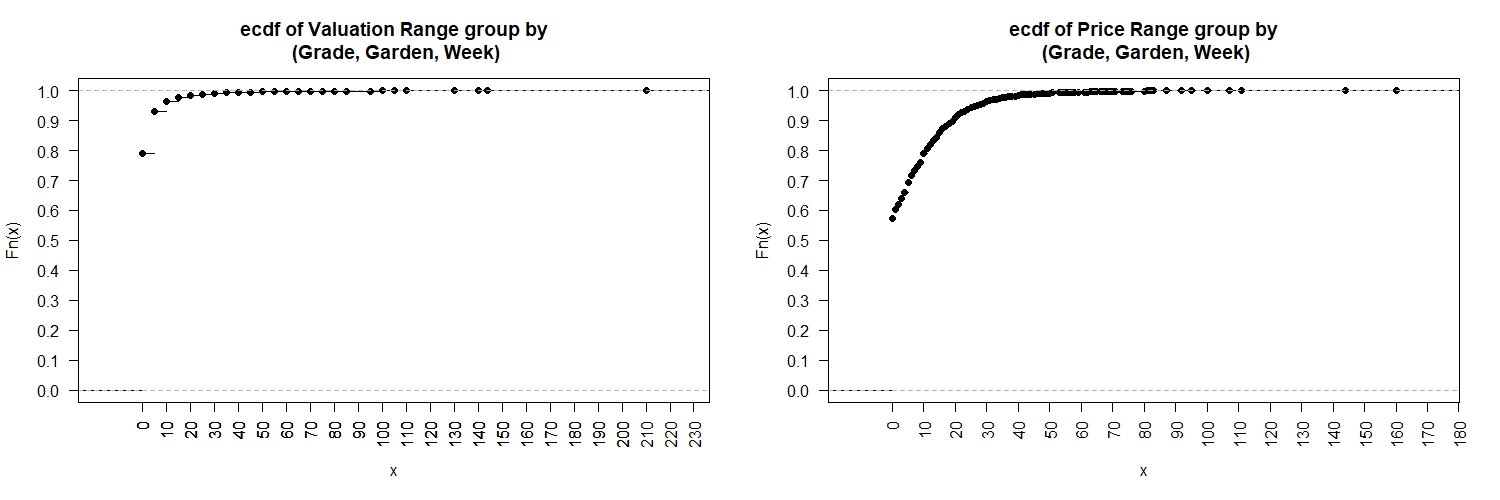}
    \caption{[Left] Empirical cumulative distribution function of the Range of Valuations of tea packets with exactly same characteristics; [Right] ecdf of the Range of Prices of tea packets with exactly same characteristics;}
    \label{fig:ecdf_range}
\end{figure}

\begin{figure}[ht]
    \centering
    \includegraphics[width = \textwidth]{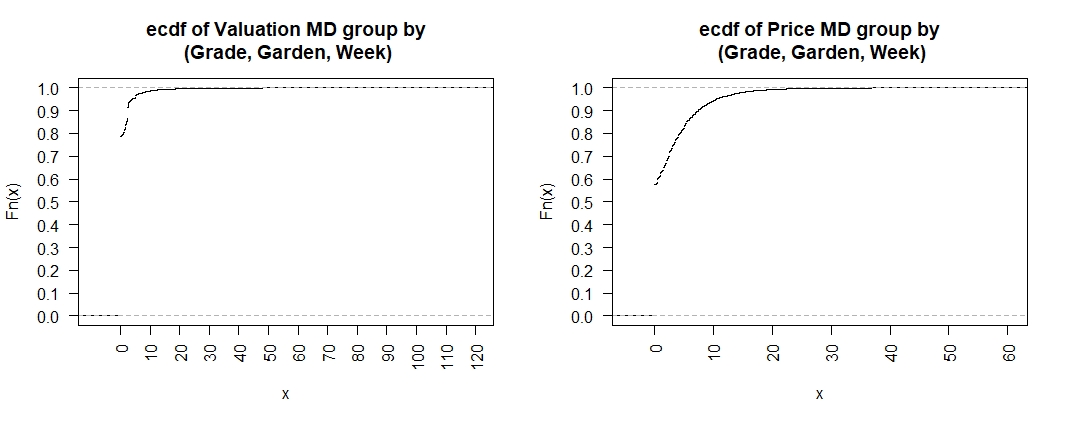}
    \caption{[Left] ecdf of the Mean deviation of Valuations of tea packets with exactly same characteristics; [Right] ecdf of the Mean deviation of Prices of tea packets with exactly same characteristics;}
    \label{fig:ecdf_md}
\end{figure}

Since there are lots of Grades, Source and possible weeks combination provided in the dataset, modeling a different prediction for each of these combinations would require a great number of parameters in the model. However, as we have already performed a reasonable clustering analysis of different tea grades and the source gardens, we wish to explore whether having a prediction for tea grades with similar cluster characteristics will be within the economical tolerance level for the auctioneers. Figure \ref{fig:ecdf_range_cluster} and \ref{fig:ecdf_md_cluster} shows the corresponding ECDFs of Ranges and Mean deviations about median of Valuations and Prices, for tea packets sharing same cluster characteristics. As seen from them, such a model that predicts at the cluster level would be inadequate in modeling either the valuation or the price soundly.

\begin{figure}[ht]
    \centering
    \includegraphics[width = \textwidth]{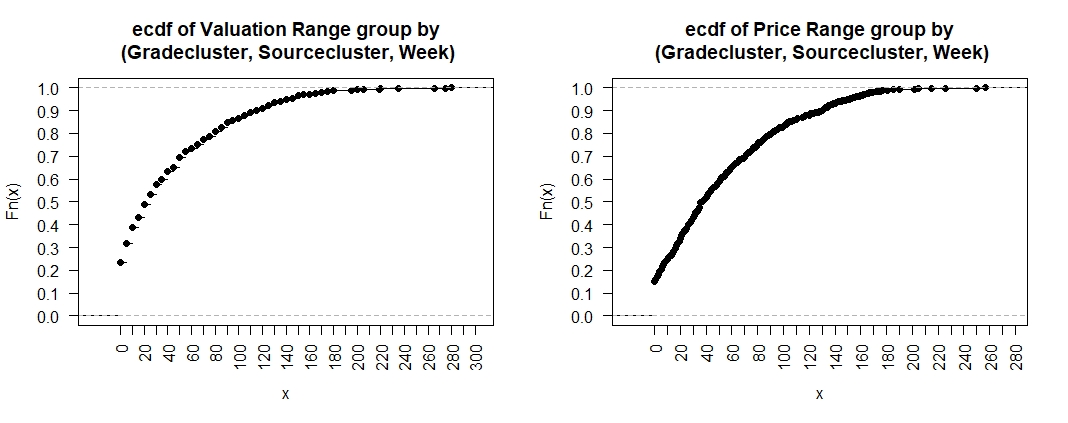}
    \caption{[Left] Empirical cumulative distribution function of the Range of Valuations of tea packets with exactly same cluster characteristics; [Right] ecdf of the Range of Prices of tea packets with exactly same cluster characteristics;}
    \label{fig:ecdf_range_cluster}
\end{figure}

\begin{figure}[ht]
    \centering
    \includegraphics[width = \textwidth]{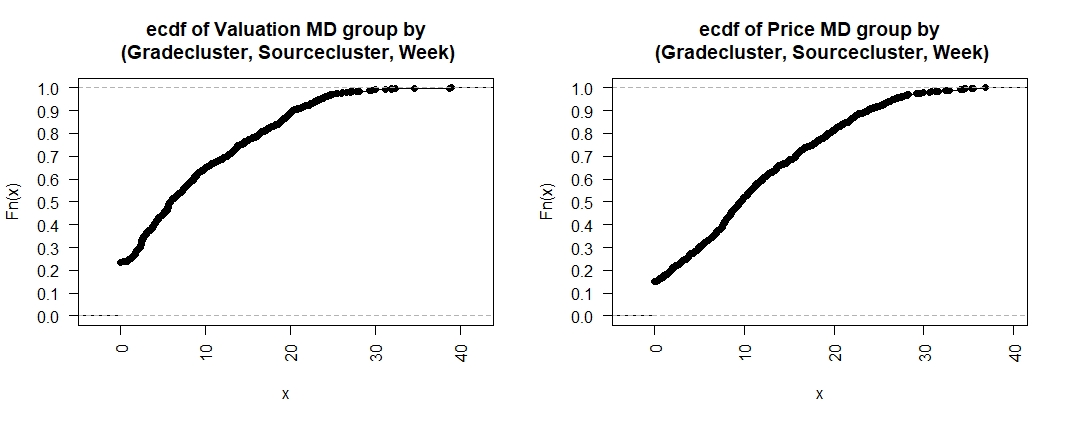}
    \caption{[Left] ecdf of the Mean deviation of Valuations of tea packets with exactly same cluster characteristics; [Right] ecdf of the Mean deviation of Prices of tea packets with exactly same cluster characteristics;}
    \label{fig:ecdf_md_cluster}
\end{figure}

\subsection{Linear Price Model}

For ease of interpretability, we start with a simple linear model for pricing system with the grade, source clusters, month of availability,  variant of source garden, the volume of the tea packet, and valuation of the tea packets as our predictors. The model is given by; 
\begin{align*}
    \Omega_1&: \text{Price}=\beta_0+\underbrace{\beta_\text{Grade}+\beta_\text{Source}+\beta_\text{Month}+\beta_\text{Garden}+\beta_1\text{Valuation}}_{\text{Factors for Demand}} +\underbrace{\beta_2\text{Volume}}_{\text{Supply}}+\varepsilon\\
    & \text{ where } \varepsilon\sim N(0,\sigma^2) \text{ independently and identically distributed}
\end{align*}

The results from this model are summarized in Table ~\ref{linear-pricemodel-val}.

\begin{table}
\centering
\caption{Analysis of Variance table for fitted linear pricing model with Valuation}
\label{linear-pricemodel-val}
\begin{tabular}{|c|c|c|c|c|c|}
\hline
               & \begin{tabular}[c]{@{}c@{}}Degrees of \\ Freedom\end{tabular} & Sum Sq.  & Mean Sq. & F Value  & p value \\ \hline
Grade  & 5  & 20140487 & 4028097  & 25153.19 & 2.2e-16 \\ \hline
Source  & 6 & 1696940  & 282823   & 1766.07  & 2.2e-16 \\ \hline
Month  & 10 & 5473468  & 547347   & 3417.87  & 2.2e-16 \\ \hline
Volume & 1 & 62247    & 62247    & 388.7   & 2.2e-16 \\ \hline
Source Variant & 4 & 186240   & 46560  & 290.74 & 2.2e-16 \\ \hline
Valuation & 1 & 13191583  & 13191583  & 82373.97 & 2.2e-16 \\ \hline
Residuals & 21106 & 3379970 & 160 &          &         \\ \hline
\end{tabular}
\end{table}

\begin{figure}
    \centering
    \includegraphics[width=\textwidth]{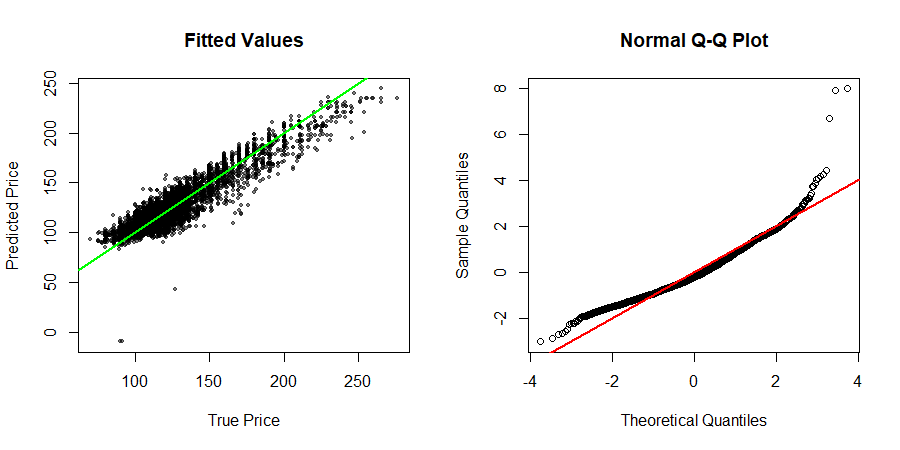}
    \caption{[Left] True Price vs Predicted Price for the fitted linear pricing model; [Right] The qqplot of standardized residuals for residual diagnostics}
    \label{linear-pricemodel-val-fig}
\end{figure}

\begin{enumerate}
    \item Multiple R-squared for the fitted model is 0.9234, suggesting a very strong linear relationship among the predictors.
    \item The Analysis of Variance decomposition between different effects has been shown in Table ~\ref{linear-pricemodel-val}. Clearly, each of the predictor variables explains a lot of variation in the price, and the extremely small p-values indicate that each one of the variables' contribution is significant.
    \item The true price and predicted price based on the fitted linear model has been shown in Figure ~\ref{linear-pricemodel-val-fig}. Note that, the predicted prices are randomly dispersed on both sides of the reference line, thereby supporting the assumption of homoscedasticity. Also, the standardized residuals have quantiles similar to that of the theoretical quantiles of a standard normal distribution, as assumed by the model, other than some drastic outliers present in both ends.
    \item Based on the evaluation of the pricing model in the testing dataset, the 2.5\% quantile of the residuals is -18.59145, while the 97.5\% quantile of the residuals is 23.43802. Hence, this pricing model approximately makes an error about 20 Rupees, to both positive and negative sides, considering a robust measure of variation.
    \item On the other hand, considering a classical approach to measure the standard error, we find the interval, with predicted value as the center and an error of 24.3229 added (or subtracted) to both sides contains the true price for 95\% of the time.
\end{enumerate}

Comparatively, proceeding with our main objective, we remove the valuation as predictor to see how much it affects our original linear pricing model. In this case, we find that a simple linear model with Price of the tea packets as response variable would make the residuals to be heteroscedastic. Therefore, we apply a variance stabilizing logarithmic transformation and use the natural logarithm of price of tea packets as response variable. Thus the model is given by; 
\begin{align*}
    \Omega_2&: \log\left(\text{Price}\right)=\beta_0+\underbrace{\beta_\text{Grade}+\beta_\text{Source }+\beta_\text{Month}+\beta_\text{Garden}}_{\text{Factors for Demand}}+\underbrace{\beta_2\text{Volume}}_{\text{Supply}}+\varepsilon
\end{align*}
where $\varepsilon\sim N(0,\sigma^2)$ independently and identically distributed.
The results obtained are summarized in Table ~\ref{linear-pricemodel-noval}

\begin{table}
\centering
\caption{Analysis of Variance table for fitted linear logarithmic pricing model without Valuation}
\label{linear-pricemodel-noval}
\begin{tabular}{|c|c|c|c|c|c|}
\hline
               & \begin{tabular}[c]{@{}c@{}}Degrees of \\ Freedom\end{tabular} & \begin{tabular}[c]{@{}c@{}}Sum Sq.\\ (log scale)\end{tabular} & \begin{tabular}[c]{@{}c@{}}Mean Sq.\\ (log scale)\end{tabular} & F Value & p value \\ \hline
Grade  & 5                                                             & 835.15                                                        & 167.030                                                        & 5566.005 & 2.2e-16 \\ \hline
Source         & 6                                                             & 52.15                                                         & 8.684                                                         & 289.375 & 2.2e-16 \\ \hline
Month          & 10                                                           & 248.20                                                        & 24.820                                                        & 827.086 & 2.2e-16 \\ \hline
Volume         & 1                                                            & 0.564                                                           & 0.564                                                         & 18.779  & 1.475e-5 \\ \hline
Source Variant & 4                                                        & 5.37                                                          & 1.342                                                          & 44.728  & 2.2e-16 \\ \hline
Residuals      & 21107                                                         & 633.40                                                        & 0.030                                                        &         &         \\ \hline
\end{tabular}
\end{table}

\begin{figure}
    \centering
    \includegraphics[width=\textwidth]{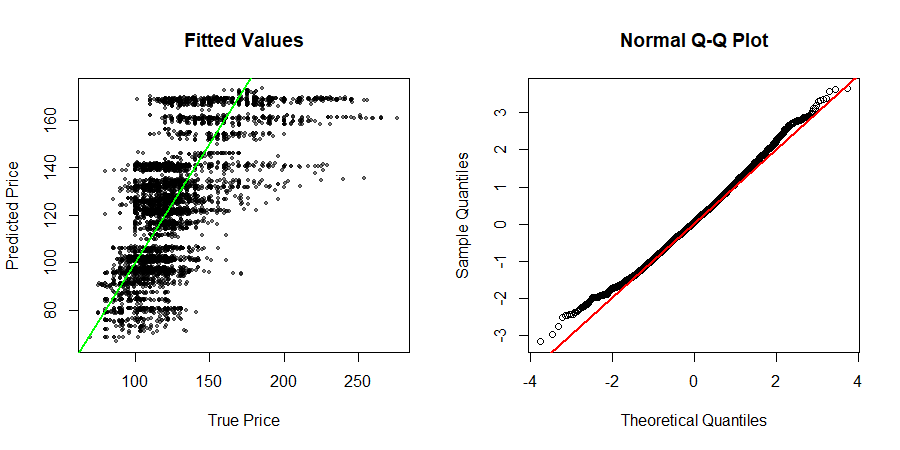}
    \caption{[Left] True Price vs Predicted Price for the fitted log-linear pricing model; [Right] The qqplot of standardized residuals for residual diagnostics}
    \label{linear-pricemodel-noval-fig}
\end{figure}

\begin{enumerate}
    \item Multiple R-squared for the fitted model is 0.6431, suggesting a moderately strong linear relationship among the predictors.
    \item The Analysis of Variance decomposition between different effects has been shown in Table ~\ref{linear-pricemodel-noval}. Clearly, each of the predictor variables explains a lot of variation in the price, and the extremely small p-values indicate that each one of the variables' contribution is significant. However, to our surprise, the effect of Volume has been explained by other variables to some extent, although the volume of tea packets generates the supply side of the market.
    \item The true price and predicted price based on the fitted log-linear model has been shown in Figure ~\ref{linear-pricemodel-noval-fig}. Note that, the amount of error in prediction increases as the true price increases, thereby supporting the evidence of heteroscedasticity as previously mentioned. Also, from the Q-Q plot, it is evident that there are some possible outliers present at lower prices.
    \item Testing the performance of the fitted model on the testing set, the 2.5\% quantile of the residuals comes as -0.29539, while the 97.5\% quantile of the residuals is 0.37754 in logarithmic scale. Hence, this pricing model approximately makes an error from 74.43\% to 145.86\% of the true prices of the tea packets. This is based on a robust approach to estimate the standard error.
    \item On the other hand, considering a classical approach to measure the standard error, we find that the predicted prices lie between 71.09\% and 140.65\% of the true prices about 95\% of the time.
\end{enumerate}

Thus the valuation part is significant for the prediction of the prices, and the process cannot be automated by a simple linear or log-linear model approach.

\section{Causal Analysis between Price and Valuation}

From the simple statistical analysis with a linear model performed above, it seems Valuation is indeed a pertinent variable in the explanation of the Pricing system. However, we could not yet allege that the auctioneers' valuation had a causal impact on the price level, although it seemed to be an indispensable predictor. To evaluate the causal impact of valuation on price, we lay down a part of the causal graph structure in Figure \ref{flowchart:tea}. Note that there may be some causation structure between the variables (for example, not all grades occur in all months, hence there should be an arrow from month to grade). Nevertheless, the variables of interest are the valuation and the final price, and the parents to them are known, assuming we subscribe to a linear understanding of time and causality. Thus all the other variables in the model are parents to both valuation and price. Furthermore, valuation could be a parent of price. But, if the valuation is indeed an 'educated guess' of price as was originally intended, then, since the guess is based on only these variables; conditioned on these variables, price and valuation should be independent trials from (possibly same) distribution. Hence, they should be independent, had valuation not been a cause of price. Hence we aim to test this independence.\par

\subsection{Causal Analysis with linear effect model}

\begin{figure}[ht]
    \centering
    \includegraphics[width = 0.8\textwidth]{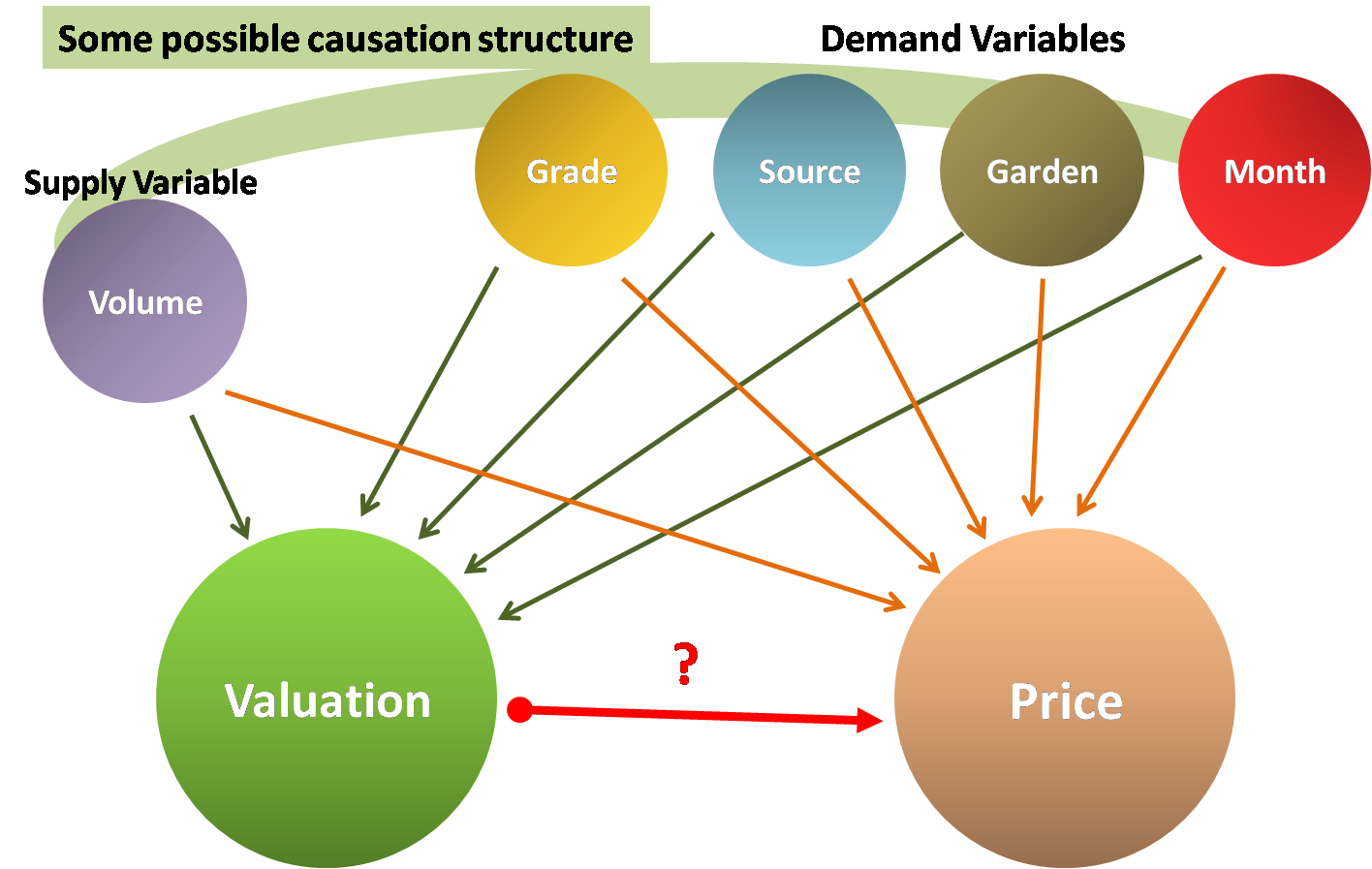}
    \caption{Causal Diagram for modeling Tea auction process}
    \label{flowchart:tea}
\end{figure}

Thus, we wish to test whether $$\text{Valuation}\indep \text{Price} \mid (\underbrace{\text{ Source, Grade, Volume, Garden, Month}}_{\text{rest}}) ?$$ However, we are faced with the fact that conditioning on so many variables render fewer data to provide reliable estimates for any inference. Hence, we take a different strategy, which is generally often taken when the conditioning variables are continuous. We model the log of valuations by a linear function of the other variables, and similarly for the log of price. That is 

\begin{align*}
    \Omega_3: \log \text{Valuation} & = \beta_1^\top \text{rest} + \varepsilon_1, \hspace{1cm}\varepsilon_1\sim \mathcal{N}(0,\sigma_1^2) \\
    \log \text{Price} & = \beta_2^\top \text{rest} + \varepsilon_2\hspace{1cm}\varepsilon_2\sim \mathcal{N}(0,\sigma_2^2), \ \varepsilon_1\indep \varepsilon_2
\end{align*}

If these models provide a good fit, then we can look at the correlation between the residuals to identify the presence of a causal link between valuation and price. This is because the residuals in both the models are independent of the rest, and for two variables $A, B$ and $C$, we have, 
$$A\indep B |C \Leftrightarrow [A - f(C)]\indep [B - g(C)] | C \Leftrightarrow A - f(C) \indep B -  g(C) $$ if $A-f(C)$ and $B-g(C)$ are independent of $C$ itself. This is akin to testing for the partial correlation between valuation and price to be 0. \par 

The results that we obtain from the data are as follows: 
\begin{itemize}
    \item $\Omega_3$ provides a reasonably good fit, yielding the value of the multiple $R^2$ to be $0.75$ and $0.64$ respectively for the two equations. The regression diagnostics checks were satisfied.
    \item The correlation between the residuals of the two regression equations of $\Omega_3$ came out to be a whopping 0.859; which cannot be attributed to sheer chance with such a large sample size (a sample size of about $19,000$).
\end{itemize}

This has widespread implications. First of all, this substantiates that valuation of the auctioneers is not as harmless as providing an educated guess for the price, but rather have a causal impact on the final price. This occurs, as the base price is visible to all potential buyers. Thus, to automate the process, whatever procedure we propose cannot be held against the standard with how the valuation predicts price, as the predictability is nested as a causal impact. Hence, to automate the process, a possible change in the auction mechanism is required, and the premises of checking the success of any such alternate mechanism with this data (where valuation has had a causal impact) would be flawed.

\subsection{Causal Analysis with Three-Stage Latent Hierarchical Causal Model}

The linear analysis provides substantial evidence in favor of the price- valuation linkage. But this may be confounded with many other interactions present in the model. Recall that we have $6$ clusters for types of tea grades and $7$ clusters for source or tea gardens from where the packet came from. Together, considering their interactions, we have a $6\times 7 = 42$ element matrix. These cells are the hidden states in the model and can be interpreted as the proxies for demands for different types of tea in the market. These states, along with several other predictors will generate a common value about a single grade, garden, and week combination, which shall be the true value of the tea packet to both the auctioneers and the buyers. Finally, these prices and valuations will be characterized based on the common value and shall further be influenced by the particular volume of the lot, which would yield the final observations. The model that describes such a situation most closely is a variant of the Linear dynamical system model, popularly associated with Kalman filter, as described in \cite{kalman1960new} and \cite{kalman1961new}.

The mathematical framework is as follows;

\begin{equation}
    Z_t = FZ_{t-1} + \xi_t \qquad \xi_t \sim N(0, Q) \quad \forall t = 1, 2, \dots T
    \label{eqn:model-Zt}
\end{equation}

where $Z_t$ is a $42\times 1$ vector, denoting the market condition. Here, no intercept term is used, since we wish the matrix $F$ to be interpreted as a transition matrix over the market condition vector $Z_t$. One can simply identify $Z_t$ to be a non-deterministic linear dynamic system. Let, $Z_t$ be denoted symbolically as;

$$
Z_t = 
\begin{bmatrix}
a_{11, t}\\
a_{12, t}\\
\vdots\\
a_{17, t}\\
a_{21, t}\\
\vdots\\
a_{67, t}
\end{bmatrix}
$$

where $a_{ij, t}$ is the latent market condition for the demand of tea in the $i$-th grade cluster and $j$-th source cluster, at time $t$. The next level of the model is;

\begin{multline}
    W_{gt} = \Phi_0 + \Phi W_{g(t-1)} + G_g Z_t + H X_{gt} + e_{gt} \qquad e_{gt} \sim N(0, R)\\
    \forall g = 1, 2, \dots N_t; \quad \forall t = 1, 2, \dots T
    \label{eqn:model-Wgt}
\end{multline}

where $W_{gt}$ is a scalar, which denotes the common value of the tea lot at the combination $g$ = (Garden, Grade) at time $t$, which depends on its previous observation, the current market state $Z_t$ and some exogenous control variables $X_{gt}$. In this case, the vector $G_g$ has a special structure such that;

$$G_g Z_t = \beta_1 a_{i_0, j_0, t} + \beta_2 \sum_{i \neq i_0} a_{i, j_0, t} + \beta_3 \sum_{j \neq j_0} a_{i_0, j, t}$$

where $\beta_1, \beta_2$ and $\beta_3$ are parameters to be estimated. Here, $g$ is a grade and garden combination such that the grade belongs to the $i_0$-th cluster, and the garden belongs to the $j_0$-th cluster of the source. This special structure means that the common value for a tea packet depends on the market condition of demand for that particular type of tea, as well as the market condition of its available substitutes, which shares either the same tea grade or the same source garden, as a potential cause for substitutability.

And finally, we have the model for the observations;

\begin{multline}
    y_{igt} = W_{gt}\mathbf{1}_2 + \Gamma u_{igt} + \epsilon_{igt} \qquad \epsilon_{igt} \sim N(0, S)\\
    \forall i = 1, 2, \dots r_{gt}; \quad \forall g = 1, 2, \dots N_t; \quad \forall t = 1, 2, \dots T
    \label{eqn:model-yigt}
\end{multline}

where $y_{igt}$ is the actual bivariate observation of Price and Valuation of $i$-th repeated measure in $g$-th group combination in $t$-th time, while $u_{igt}$ is some more exogenous variables, whose influences are incorporated only in the final stage and;

$$\mathbf{1}_2 = \begin{bmatrix}
1\\
1\\
\end{bmatrix}$$

The reason we need to deviate from the standard Simple Adaptive Control Model \cite{meyntweedie} (Page 40) or the popularly known Kalman Filter model (which allows for only two indices), is that we have more than one observations which are manifestations of the same state (i.e., three indices), and hence, a direct influence of the states do not account for the variability within the observations of the same state.

In the above specification of our three-stage model, we only observe the variables $y_{igt}, u_{igt}$ and $X_{gt}$, and the latent variables $Z_t, W_{gt}$ are unobservable. Hence, we can characterize the model with the specification of all the parameters and the unobservable latent variables, namely by the list of elements $(Z_t, W_{gt}, F, Q, \Phi_0, \Phi, G_g, H, R, \Gamma, S)$. Unfortunately, the above model is not identifiable, as the new set of elements given by $(\alpha Z_t, W_{gt}, F, \alpha^2 Q, \Phi_0, \Phi, \frac{1}{\alpha} G_g, H, R, \Gamma, S)$, also result in the exact same model. The crucial reason for this unidentifiablity is that the first equation contains no observable variable. For this reason, we require to pose a constraint on the model by specifying $\Vert Z_t \Vert = 1$, i.e. the vector $Z_t$'s are normalized for any $t = 0, 1, \dots T$. 

Figure \ref{fig:causal-dag-3-stage} shows the Causal DAG diagram for the above three-stage latent hierarchical (TSLH) model, for a fixed time point $t$, the defining SCM for this are given by equations \ref{eqn:model-Zt}, \ref{eqn:model-Wgt} and \ref{eqn:model-yigt}. Note that, there are $4$ exogenous variables at the second stage (as noted from the four parameters $H_1, H_2, H_3$ and $H_4$) and only one exogenous variable in the third stage. We use Gibb's sampler to obtain the estimates, as discussed in the following subsection. The description of these parameters, along with the estimated value from the dataset is given in table \ref{tbl:3-stage-parameters}.

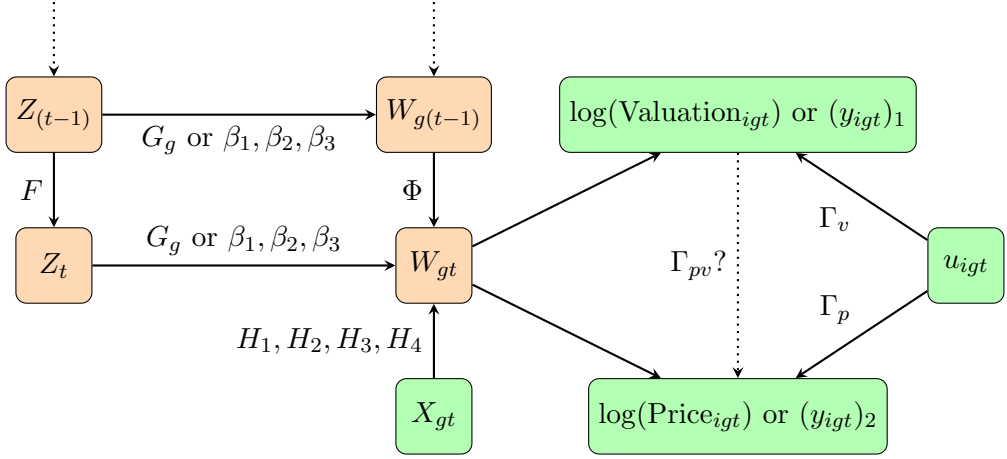
\begin{figure}[ht]
    \centering
    \begin{tikzpicture}
        \node (Z_cur) [latents] {$Z_t$};
        \node (Z_prev) [latents, above of=Z_cur, yshift = 1cm] {$Z_{(t-1)}$};
        \node (W_cur) [latents, right of=Z_cur, xshift = 4cm] {$W_{gt}$};
        \node (W_prev) [latents, right of=Z_prev, xshift = 4cm] {$W_{g(t-1)}$};
        \node (Z_past) [invisible, above of=Z_prev, yshift = 1cm] {};
        \node (W_past) [invisible, above of=W_prev, yshift = 1cm] {};
        \node (X_cur) [obsvars, below of=W_cur, yshift = -1cm] {$X_{gt}$};
        \node (Val_cur) [obsvars, right of=W_cur, yshift = 2cm, xshift = 3cm ] {$\log(\text{Valuation}_{igt})$ or $(y_{igt})_1$};
        \node (Price_cur) [obsvars, right of=W_cur, yshift = -2cm, xshift = 3cm ] {$\log(\text{Price}_{igt})$ or $(y_{igt})_2$};
        \node (u_cur) [obsvars, right of=W_cur, xshift = 6cm ] {$u_{igt}$};

        \draw [arrow] (Z_prev) -- node[anchor=east] {$F$} (Z_cur);
        \draw [arrow, dotted] (Z_past) -- (Z_prev);
        \draw [arrow, dotted] (W_past) -- (W_prev);
        \draw [arrow] (Z_prev) -- node[anchor=north] {$G_g$ or $\beta_1, \beta_2, \beta_3$} (W_prev);
        \draw [arrow] (Z_cur) -- node[anchor=south] {$G_g$ or $\beta_1, \beta_2, \beta_3$} (W_cur);
        \draw [arrow] (W_prev) -- node[anchor=east] {$\Phi$} (W_cur);
        \draw [arrow] (X_cur) -- node[anchor=east] {$H_1, H_2, H_3, H_4$} (W_cur);
        \draw [arrow] (u_cur) -- node[anchor=north east] {$\Gamma_{v}$} (Val_cur);
        \draw [arrow] (u_cur) -- node[anchor=south east] {$\Gamma_{p}$} (Price_cur);
        \draw [arrow] (W_cur) -- (Val_cur);
        \draw [arrow] (W_cur) -- (Price_cur);
        \draw [arrow, dotted] (Val_cur) -- node[anchor=east] {$\Gamma_{pv}$?} (Price_cur);

    \end{tikzpicture}

    \caption{Causal DAG for the three stage latent hierarchical model }
    \label{fig:causal-dag-3-stage}
\end{figure}

\subsubsection{Gibbs Sampling Conditional Derivations}

We begin by writing the likelihood for the Three Stage Latent Hierarchical (TSLH) model, upto a proportionality constant.

\begin{align*}
    \mathcal{L} 
    & \propto \prod_{i, g, t} {\vert S\vert}^{-1/2} \exp\left[-\frac{1}{2} (y_{igt} - W_{gt}\mathbf{1}_2 - \Gamma u_{igt} )^{\top}S^{-1} (y_{igt} - W_{gt}\mathbf{1}_2 - \Gamma u_{igt} )\right] \times \\
    & \qquad \prod_{g, t} {\vert R\vert}^{-1/2} \exp\left[-\frac{1}{2} (W_{gt} - \Phi_0 - \Phi W_{g(t-1)} - G_g Z_t - HX_{gt} )^{\top}\right. \\ 
    & \qquad \qquad \qquad \qquad \qquad \bigg. R^{-1} (W_{gt} - \Phi_0 -\Phi W_{g(t-1)} - G_g Z_t - HX_{gt} )  \bigg] \times \\
    & \qquad \prod_t {\vert Q\vert}^{-1/2} \exp\left[-\frac{1}{2} (Z_{t} - FZ_{(t-1)} )^{\top}Q^{-1} (Z_{t} - F Z_{(t-1)} )  \right]
\end{align*}

Before obtaining the individual conditional distributions, the following observation will come in handy: 

Note that, when $Y\sim \mathcal{N}_k(\mu,\Sigma)$ as in regression model, then 
$$f(\mathbf{y})\propto \exp\left(-\dfrac 12Y^\top \Sigma^{-1} Y + Y^\top \Sigma^{-1}\mu \right)$$ 
Therefore, we can simply identify the normal distribution based on these coefficients $\Sigma^{-1}$ and $\Sigma^{-1}\mu$, which is a reparametrization of the parameters of normal distribution. This reparametrization shall be helpful in identifying the conditional distributions in the subsequent calculations. 

We first obtain the conditional distributions for the latent variables, $Z_t$ and $W_{gt}$ respectively.

\begin{multline*}
    Z_t\mid \text{rest} \propto \exp\left[ -\dfrac{1}{2} \left\{  Z_t^{\top}\left(Q^{-1} + F^{\top}Q^{-1}F + \sum_g G_g^{\top}R^{-1}G_g  \right) Z_t \right.\right.\\
    \left.\left.
    - 2Z_t^{\top} \left( Q^{-1}FZ_{t-1} + F^{\top}Q^{-1}Z_{t+1} + \sum_g G_g^{\top} R^{-1} \left( W_{gt} - \Phi_0 - \Phi W_{g(t-1)} - HX_{gt} \right) \right) 
    \right\} \right]
\end{multline*}

which is a normal distribution with the above reparametrization where, $\Sigma^{-1}$ is the coefficient of quadratic term, and $\Sigma^{-1}\mu$ is the coefficient of the linear term.
 
Now, with $W_{gt}$, we have;

\begin{multline*}
    W_{gt} \mid \text{rest} \propto
    \exp\left[-\dfrac{1}{2}\left\{
        W_{gt}^{2} \left( \dfrac{1}{R} + \dfrac{\Phi^2}{R} + r_{gt}\mathbf{1}_2^{\top}S^{-1}\mathbf{1}_2 \right) 
        - 2 W_{gt} \left( \dfrac{\Phi_0 +\Phi W_{g(t-1)} + G_g Z_t }{R} \right.\right.\right.\\
        \left.\left.\left.
        + \dfrac{ HX_{gt}+\Phi(W_{g(t+1)} - \Phi_0 - G_g Z_{t-1} - HX_{g(t-1)} )}{R} + \mathbf{1}_2^{\top}S^{-1}(y_{igt} - \Gamma u_{igt}) \right)
    \right\}\right]
\end{multline*}

This leads to another normal distribution. Next, for the parameters, 

$$Q \mid \text{rest} \propto \vert Q\vert^{-T/2} \exp\left[ -\dfrac{1}{2} \text{tr}\left( \left(\sum_t \xi_t \xi_t^{\top} \right) Q^{-1} \right) \right]$$

which leads to $\mathcal{W}^{-1}\left[ \sum_t \xi_t \xi_t^{\top}; T-43 \right]$ distribution, where $\mathcal{W}^{-1}$ is used to denoted Inverse Wishart distribution.

Similarly, 

$$R \mid \text{rest} \propto R^{-\sum_{t}N_t /2} \exp\left[ -\dfrac{1}{2}  \left(\sum_{g,t} e_{gt} e_{gt}^{\top} \right) R^{-1} \right]$$

which is same as the density function of Inverse Gamma distribution with shape $\alpha = \sum_t \dfrac{N_t}{2} - 1$, scale parameter $\beta = \dfrac{1}{2} \left(\sum_{g,t} e_{gt} e_{gt}^{\top} \right)$, upto a proportionality constant.

And finally, 

$$S \mid \text{rest} \propto \vert S\vert^{-N/2} \exp\left[ -\dfrac{1}{2} \text{tr}\left( \left(\sum_{i, g, t} \epsilon_{igt} \epsilon_{igt}^{\top} \right) S^{-1} \right) \right]$$

which again leads to $\mathcal{W}^{-1}\left[ \sum_{i, g, t} \epsilon_{igt} \epsilon_{igt}^{\top}; N-3 \right]$ distribution.

Continuing,

\begin{align*}
    F \mid \text{rest} & \propto \exp\left[ -\dfrac{1}{2} \left\{ 
    \text{tr}\left( \sum_t Z_{t-1}^{\top}F^{\top}Q^{-1}FZ_{t-1} \right) - 2\text{tr}\left( \sum_t Z_{t-1}^{\top}F^{\top}Q^{-1}Z_t \right) \right\} \right]\\
    & \propto \exp\left[ -\dfrac{1}{2} \left\{ 
        \text{tr}\left( Q^{-1} \left(\sum_t Z_{t-1}Z_{t-1}^{\top}\right) F^{\top}F \right) - 2\text{tr}\left( Q^{-1} F^{\top} \left(\sum_t Z_t Z_{t-1}^{\top}\right) \right) \right\} \right]
\end{align*}

therefore,

$$F \mid \text{rest} \sim \mathcal{MN}_{42\times 42}\left( \left[ \sum_t Z_{t-1}Z_{t-1}^{\top}\right]^{-1} \left[ \sum_t Z_t Z_{t-1}^{\top}\right], I, \left[ \sum_t Z_{t-1}Z_{t-1}^{\top}\right]^{-1} Q \right)$$

where $\mathcal{MN}$ stands for the matrix normal distribution. In other words, since the covariance matrix between the rows of the $F$ is $I$, the identity matrix, hence we can generate the rows of $F$ independently from multivariate normal distributions with mean vectors same as the rows of the mean matrix, and the same covariance matrix $\left[ \sum_t Z_{t-1}Z_{t-1}^{\top}\right]^{-1} Q$.

On a similar note, 

\begin{align*}
    \Gamma \mid \text{rest} & \propto \exp\left[ -\dfrac{1}{2} \left\{ 
        \text{tr}\left( \sum_{i, g, t} u_{igt}^{\top}\Gamma^{\top}S^{-1}\Gamma u_{igt} \right) - 2 \text{tr}\left( \sum_{i, g, t} u_{igt}^{\top}\Gamma^{\top}S^{-1}\left( y_{igt} - W_{gt}\mathbf{1}_2 \right) \right) 
    \right\} \right]\\
    & \propto \exp\left[ -\dfrac{1}{2} \left\{ 
        \text{tr}\left( S^{-1} \left(\sum_{i, g, t} u_{igt} u_{igt}^{\top}\right)\Gamma^{\top}\Gamma \right) - 2 \text{tr}\left( S^{-1} \Gamma^{\top} \sum_{i, g, t} \left( y_{igt} - W_{gt}\mathbf{1}_2 \right)u_{igt}^{\top} \right) 
    \right\} \right]\\
\end{align*}

Therefore,

$$\Gamma \mid \text{rest} \sim \mathcal{MN}_{2\times 2}\left(  \left[\sum_{i, g, t} u_{igt} u_{igt}^{\top}\right]^{-1} 
\left[\sum_{i, g, t} \left( y_{igt} - W_{gt}\mathbf{1}_2 \right)u_{igt}^{\top}\right], I, \left[\sum_{i, g, t} u_{igt} u_{igt}^{\top}\right]^{-1}S \right)$$

To get conditional distribution of parameters corresponding to stage 2 of the model, we assume that, $G_g Z_t = \beta_1 \tilde{Z}_{1t} + \beta_2 \tilde{Z}_{2t} + \beta_3 \tilde{Z}_{3t}$, with $\tilde{Z}$ being the proper linear combination of latent state $Z_t$ that affects the common value $W_{gt}$. Let us also denote the vector of parameters, 

$$\theta = \begin{bmatrix}
    \Phi_0 & \Phi & \beta_1 & \beta_2 & \beta_3 & H
\end{bmatrix}$$

Based on this, we have;

$$\theta \mid \text{rest} \sim \mathcal{MVN}\left( A^{-1}b; A^{-1}R \right)\text{ where }b = \left(
    \sum_{g, t} W_{gt},
    \sum_{g, t} W_{gt} W_{g(t-1)},
    \sum_{g, t} W_{gt} \tilde{Z}_{1t},
    \dots \right)^\top $$

and 

$$A = \begin{bmatrix}
    \sum_t N_t & \sum_{g, t} W_{g(t-1)} & \sum_t N_t\tilde{Z}_{1t} & \sum_t N_t\tilde{Z}_{2t} & \sum_t N_t\tilde{Z}_{3t} & \sum_{g, t} X_{gt}\\
    \sum_{g, t} W_{g(t-1)} & \sum_{g, t} W^2_{g(t-1)} & \sum_{g, t} W_{g(t-1)}\tilde{Z}_{1t} & \dots & \dots & \dots\\
    \sum_t N_t\tilde{Z}_{1t} & \sum_{g, t} W_{g(t-1)}\tilde{Z}_{1t} & \sum_t N_t\tilde{Z}^2_{1t} & \dots & \dots & \dots\\
    \dots & \dots & \dots & \dots & \dots & \dots \\
\end{bmatrix}$$

where $\mathcal{MVN}$ denotes the multivariate normal distribution.

\subsubsection{Estimated values and implications}
The performance of the estimated model has been shown in Figure \ref{fig:TSLH-val-train-test} and in Figure \ref{fig:TSLH-price-train-test}. Some of the residual diagnostics are shown in Figure \ref{fig:TSLH-val-res} and Figure \ref{fig:TSLH-price-res}, from which it is obvious that the residuals in logarithm scale follow an approximate normal distribution, other than some outlying values in the tail, as well as the residuals in the original scale of price, shows a histogram of leptokurtic distribution which closely resembles a lognormal one. The estimated values of the parameters of TSLH model are shown in the table \ref{tbl:3-stage-parameters}.

\begin{longtable}[h]{|c| L{5.5cm} | C{2cm} | C{2cm} | C{2cm} |}
    \caption{Description and Estimated values of the Parameters of Three Stage Latent Hierarchical (TSLH) Model (For sake of simplicity, only parameters for second and third stages are shown)}\label{tbl:3-stage-parameters}\\
    \hline
    \textbf{Parameter} & \centering {\textbf{Description}} & \textbf{95\% Lower Confidence Limit} & \textbf{Estimate} & \textbf{95\% Upper Confidence Limit}\\
    \hline
    \endfirsthead
    \multicolumn{5}{c}%
    {\tablename\ \thetable\ -- \textit{Continued from previous page}} \\
    \hline
    \textbf{Parameter} & \centering{\textbf{Description}} & \textbf{95\% Lower Confidence Limit} & \textbf{Estimate} & \textbf{95\% Upper Confidence Limit}
    \\
    \hline
    \endhead
    \hline \multicolumn{5}{r}{\textit{Continued on next page}}\\
    \endfoot
    \hline
    \endlastfoot
    $\Phi_0$ & Intercept in the second stage of the TSLH model. It incorporates bias in the estimation of latent common utility value $W_{gt}$ based on its previous value and exogenous variables & $0.1507994$ & $0.2425758$ & $0.3131058$\\
    \hline
    $\Phi$ & The effect of value of the latent common utility of previous time point has on explaining the value for current time point & $0.9623784$ & $0.9849693$ & $1.047248$\\
    \hline
    $\beta_1$ & The direct effect on the value of latent common utility $W_{gt}$ by the latent market demand $Z_t$ &  $-0.9382417$ & $0.00435303$ & $0.9869709$\\
    \hline
    $\beta_2$ & The substitution effect on the value of latent common utility $W_{gt}$ by the latent market demand $Z_t$ having the same grade cluster as tea grade in the index $g$ & $-0.4668122$ & $-0.0008195$ & $0.4655594$\\
    \hline
    $\beta_3$ & The substitution effect on the value of latent common utility $W_{gt}$ by the latent market demand $Z_t$ having the same source cluster as tea garden in the index $g$ & $-0.3721541$ & $-0.0028576$ & $0.4130075$\\
    \hline
    $H_1$ & Effect of logarithm of the total volume of the tea having the grade and garden characteristics same as $g$ coming in previous week, on explaining $W_{gt}$ & $-0.2156821$ & $-0.1529215$ & $-0.0791317$\\
    \hline
    $H_2$ & Effect of logarithm of the total volume of the tea having the tea grade cluster characteristics same as $g$, coming in previous week, on explaining $W_{gt}$ & $-0.3688453$ & $-0.321775$ & $-0.2688594$\\
    \hline
    $H_3$ & Effect of logarithm of the total volume of the tea having the tea source cluster characteristics same as $g$, coming in previous week, on explaining $W_{gt}$ & $-0.1018715$ & $-0.0534532$ & $-0.0070205$\\
    \hline
    $H_{4C}$ & Effect of clonal type of tea garden on $W_{gt}$ compared to the effect of regular type tea garden & $0.02484161$ & $0.0470203$ & $0.06408$ \\
    \hline
    $H_{4G}$ & Effect of gold type of tea garden on $W_{gt}$ compared to the effect of regular type tea garden & $-0.0386021$  & $-0.0252078$ & $-0.0203631$\\
    \hline
    $H_{4R}$ & Effect of royal type of tea garden on $W_{gt}$ compared to the effect of regular type tea garden & $-0.1240205$ & $0.007148$  & $0.244879$\\
    \hline
    $H_{4S}$ & Effect of special type of tea garden on $W_{gt}$ compared to the effect of regular type tea garden & $0.0316522$ & $0.034810$ & $0.037769$\\
    \hline
    $R$ & The variance of error in second stage of TSLH model & $0.144291$ & $0.146503$ & $0.1483624$\\
    \hline
    $\Gamma_{v}$ & Effect of logarithm of the net volume of the tea packet to be sold on the logarithm of Valuation of that particular packet & $0.7726263$ & $0.7829123$ & $0.7919209$\\
    \hline
    $S_{v}$ & The variance of error in third stage of TSLH model for estimating logarithm of Valuation of tea packets & $0.08097662$ & $0.08535738$ & $0.08893837$\\
    \hline
    $\Gamma_{pv}$ & Effect of logarithm of Valuation of a tea packet on the logarithm of Price of that particular tea packet & $0.966256$ & $1.006905$ & $1.040504$ \\
    \hline
    $\Gamma_{p}$ & Effect of logarithm of the net volume of the tea packet to be sold on the logarithm of Price of that particular packet & $-0.0370133$ & $-0.0190927$ &$-0.001596$\\
    \hline
    $S_{p}$ & The variance of error in third stage of TSLH model for estimating logarithm of Valuation of tea packets & $0.02646$ & $0.048802$ & $0.090898$\\
    \hline
\end{longtable}

\begin{figure}[ht]
\centering
\begin{subfigure}{.45\textwidth}
  \centering
  \includegraphics[width=\textwidth]{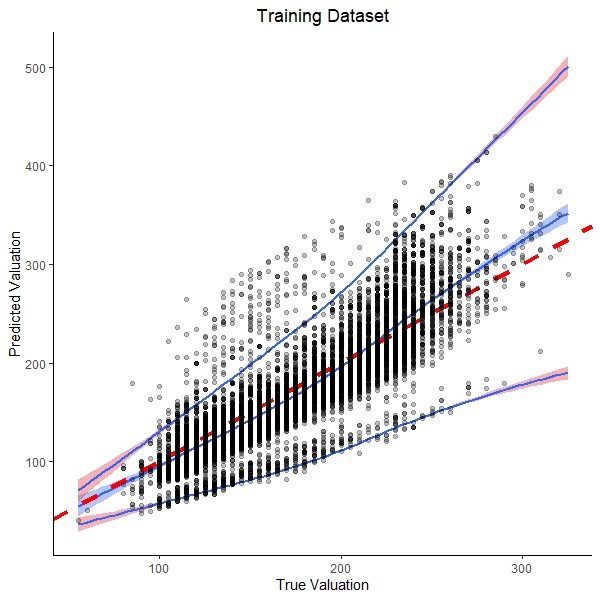}
\end{subfigure}%
\hfill
\begin{subfigure}{.45\textwidth}
  \centering
  \includegraphics[width=\textwidth]{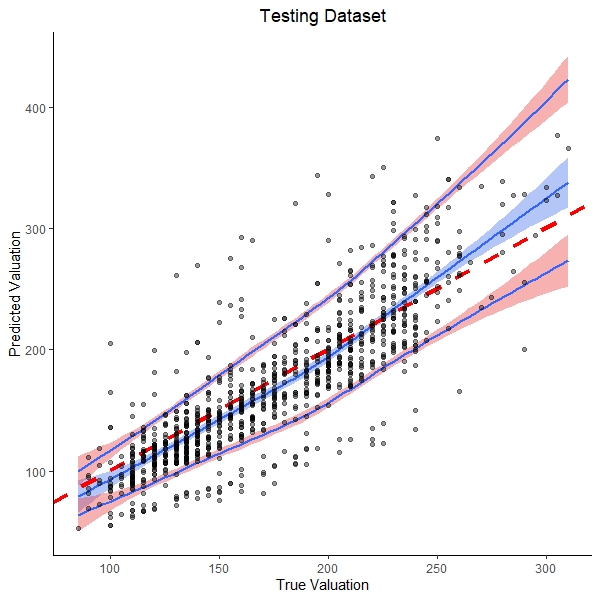}
\end{subfigure}
\caption{True Valuation vs Predicted Valuation in the training set and in the testing (or cross-validation) set. A kernel smoother for lower and upper limits of the $95\%$ confidence interval has been shown, along with the red dashed reference line $y = x$.}
\label{fig:TSLH-val-train-test}
\end{figure}

\begin{figure}[ht]
\centering
\begin{subfigure}{.45\textwidth}
  \centering
  \includegraphics[width=\textwidth]{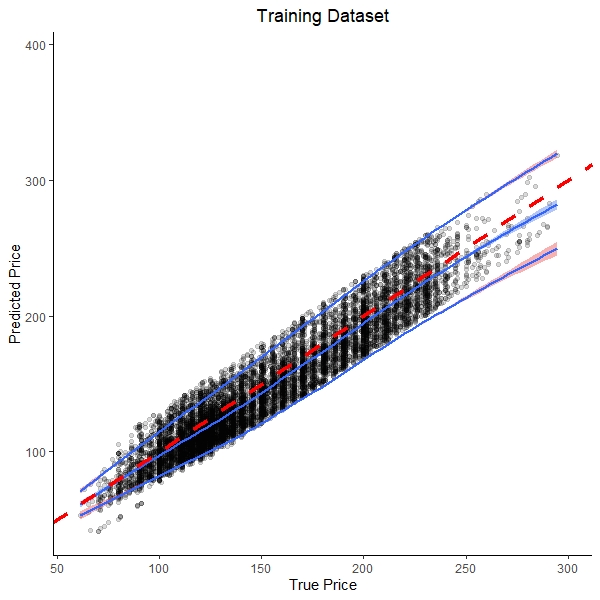}
\end{subfigure}%
\hfill
\begin{subfigure}{.45\textwidth}
  \centering
  \includegraphics[width=\textwidth]{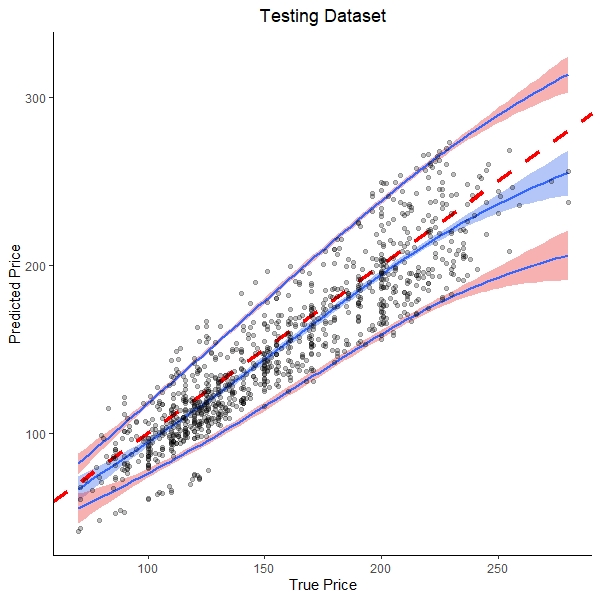}
\end{subfigure}
\caption{True Price vs Predicted Price in the training set and in the testing (or cross-validation) set. A kernel smoother for lower and upper limits of the $95\%$ confidence interval has been shown, along with the red dashed reference line $y = x$.}
\label{fig:TSLH-price-train-test}
\end{figure}

\begin{figure}[ht]
\centering
\begin{subfigure}{.45\textwidth}
  \centering
  \includegraphics[width=\textwidth]{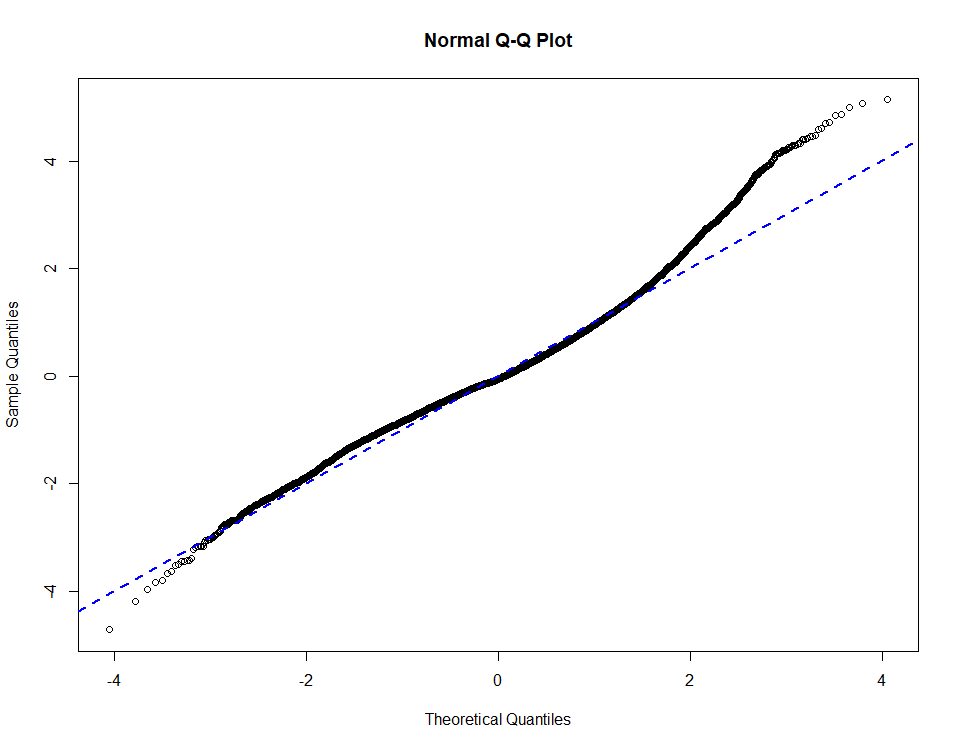}
  \caption{Normal Q-Q plot for residuals in estimating valuation}
\end{subfigure}%
\hfill
\begin{subfigure}{.45\textwidth}
  \centering
  \includegraphics[width=\textwidth]{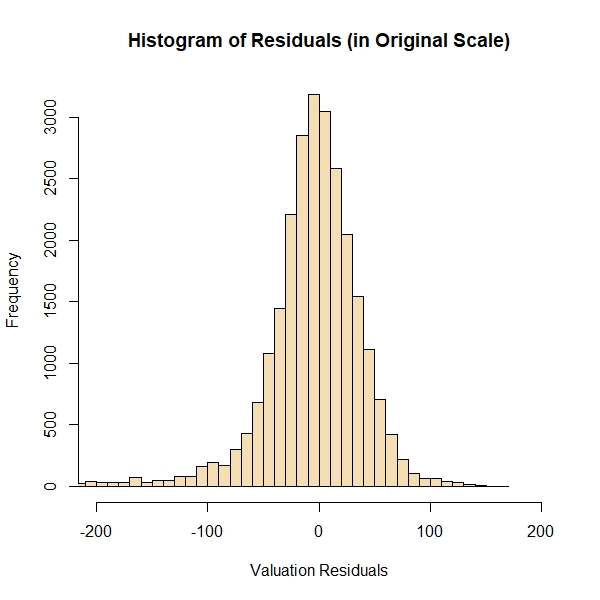}
  \caption{Histogram of the residuals of valuation model in estimating auctioneer's valuation}
\end{subfigure}
\caption{Residual diagnostics plots for valuation predicting component of TSLH model}
\label{fig:TSLH-val-res}
\end{figure}

\begin{figure}[ht]
\centering
\begin{subfigure}{.4\textwidth}
  \centering
  \includegraphics[width=\textwidth]{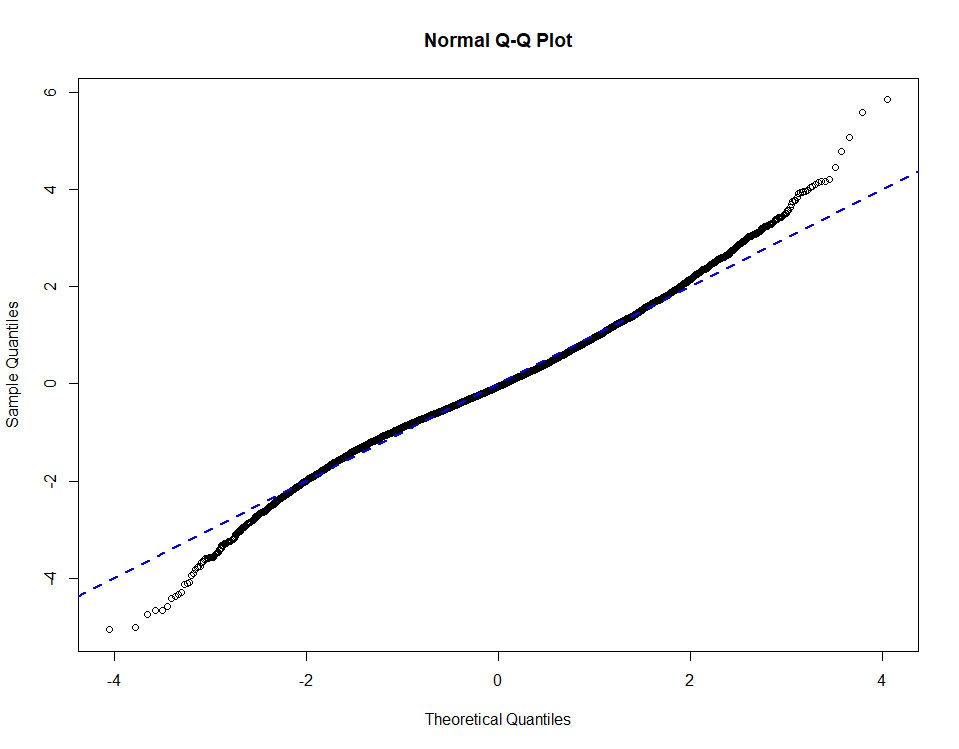}
  \caption{Normal Q-Q plot for residuals in estimating price}
\end{subfigure}%
\hfill
\begin{subfigure}{.5\textwidth}
  \centering
  \includegraphics[width=\textwidth]{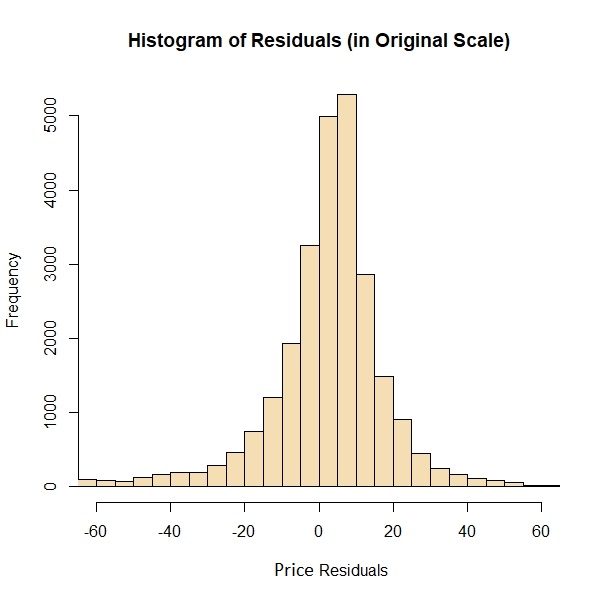}
  \caption{Histogram of the residuals of valuation model in estimating price}
\end{subfigure}
\caption{Residual diagnostics plots for price predicting component of TSLH model}
\label{fig:TSLH-price-res}
\end{figure}

To emphasize the goodness of fit for the Three Stage Latent Hierarchical (TSLH) model we obtain the following:

\begin{enumerate}
    \item The residuals in valuation predicting component of the TSLH model makes approximately $38$ rupees of error with $80\%$ confidence and about $59$ rupees of error with $95\%$ of confidence.
    \item The residuals in price predicting component of the TSLH model makes approximately $22$ rupees of error with $80\%$ confidence and about $29$ rupees of error with $95\%$ of confidence.
\end{enumerate}

The results we obtain have the following notable implications:

\begin{enumerate}
    \item Since $\beta_1$ is positive, it appears that the more is the market demand for a particular type of tea, the more is its value to the buyers. However, its small value (inclusion of 0 in confidence interval) suggests that this dependence with the underlying market condition is nonetheless small.
    \item Since $\beta_2$ and $\beta_3$ are estimated to be negative, it appears that the substitution effect is present in this scenario. The more is the market demand for the substitutes of a tea type, the less is its utility to the buyers, who are going to ultimately sell it to the consumers.
    \item Negative values of $H_1, H_2$, and $H_3$ show that, if one type of tea has already dispatched in the market by a large volume in the recent past, then their utility to the buyers of the auction is going to be less.
    \item An interesting thing to note is the opposite behavior of $\Gamma_p$ and $\Gamma_v$ when we allow an unrestricted $\Gamma_{pv}$ in the model. Generally, we expect that larger volumes of tea packets to be sold at smaller per unit price, which is in accordance with a negative $\Gamma_p$ value. However, if the valuation is indeed nothing but an educated guess for the price, then the behavior for $\Gamma_p$ and $\Gamma_v$ should be identical. The fact that $\Gamma_v$ is positive suggests that the auctioneer tries to balance for the act of the decrease in per-unit price with larger packet volumes, by setting its valuation at a price larger than the market would expect it to be.
    \item A large positive value of $\Gamma_{pv}$ suggests a very strong positive dependence on the prices of tea packets on the auctioneers' valuation of the tea packets. However, this dependence contains both the direct dependence of price on the valuation and an indirect dependence of price through the mediating effect of the spurious latent variable $W_{gt}$.
\end{enumerate}

\subsubsection{Test of causality}
Now, to answer the question about whether there is a significant direct effect from valuation to the price, (i.e. whether the dotted arrow in Figure \ref{fig:causal-dag-3-stage} exists or not) a very general approach is the conditional independence test, as discussed in \cite{pearl2016causal} and \cite{AnIntroductiontoCausalInference}. As shown in Figure \ref{fig:causal-dag-3-stage}, $W_{gt}$ and $u_{igt}$ creates a fork with $\log(\text{Valuation}_{igt})$ and $\log(\text{Price}_{igt})$ nodes in the DAG. However we shall require the following theorem: 

\begin{theorem}

The estimated $\Gamma_{pv}$ based on equation \ref{eqn:model-yigt} where $W_{gt}$ is substituted by the samples obtained from the Gibbs sampler, converges in distribution to the $\Gamma_{pv}$ estimated using the true value of the latent variables (although unknown).
\end{theorem}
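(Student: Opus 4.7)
The plan is to couple the ergodicity of the Gibbs sampler derived in the previous subsection with a Bayesian plug-in consistency argument. First, I would invoke the standard theory for Gibbs samplers whose full conditionals are all from standard exponential families: since the conditionals derived above (multivariate normal for $Z_t$ and $W_{gt}$, matrix normal for $F$ and $\Gamma$, inverse-Wishart for $Q$ and $S$, inverse-gamma for $R$, and multivariate normal for the regression coefficient vector $\theta$) are non-degenerate on their respective supports, the chain is $\phi$-irreducible, aperiodic and Harris recurrent, and a standard drift/minorisation argument yields geometric ergodicity. Consequently, after a sufficiently long burn-in, the draws $W^{(s)}$ can be treated as samples from the marginal posterior $\pi(W \mid \mathrm{data})$.

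Second, I would establish posterior concentration of $W$ around its true value $W^*$. In the linear--Gaussian hierarchical model specified by (\ref{eqn:model-Zt})--(\ref{eqn:model-yigt}), the full conditional for each $W_{gt}$ is Gaussian with variance that shrinks to zero as the number of repeated observations $r_{gt}$ in group $g$ at time $t$ (together with the information flowing in from the $Z_t$-block) grows. A Bernstein--von Mises type argument then delivers
\[
W^{(s)}_{gt} - W^*_{gt} = O_p(n^{-1/2})
\]
uniformly over $(g,t)$ in an appropriate asymptotic regime, where $n$ is the effective sample size.

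Third, I would note that $\widehat{\Gamma}_{pv}$, viewed as the ordinary least squares coefficient obtained from the third-stage regression with $y_{igt}$ as response and $(W_{gt}, u_{igt})$ as regressors, is a continuous (after centering, in fact affine-rational) functional of the $W_{gt}$'s and the observed data. The continuous mapping theorem combined with Slutsky's lemma then gives
\[
\widehat{\Gamma}_{pv}\bigl(W^{(s)}\bigr) - \widehat{\Gamma}_{pv}\bigl(W^*\bigr) = o_p(1),
\]
while the $\sqrt{n}$-asymptotic normality of $\widehat{\Gamma}_{pv}(W^*)$ from classical OLS theory produces a non-degenerate limit law. By Slutsky again, $\widehat{\Gamma}_{pv}(W^{(s)})$ and $\widehat{\Gamma}_{pv}(W^*)$ share that limit, which is exactly the claimed convergence in distribution.

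The main obstacle is the second step. The dimension of the latent block $W$ grows with the number of group--time cells, so the classical finite-dimensional Bernstein--von Mises theorem does not apply off the shelf; a careful specification of the asymptotic regime (for instance $T$ fixed with $r_{gt}\to\infty$, or $T\to\infty$ with bounded $N_t$ under stationarity of the linear dynamical system \ref{eqn:model-Zt}) must be made, and one has to verify that the posterior information matrix for $W$ remains uniformly non-singular in that limit. A secondary technical point is that the identifiability constraint $\Vert Z_t\Vert = 1$ needs to be tracked through the coupling so that the induced boundary does not spoil the $O_p(n^{-1/2})$ concentration bound.
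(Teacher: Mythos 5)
Your first and third steps track the paper's argument: the Gibbs chain converges in law to its stationary (posterior) distribution, and $\widehat{\Gamma}_{pv}$ is a continuous functional of $(y_{igt}, u_{igt}, W_{gt})$, so the continuous mapping theorem transfers the convergence. The genuine gap is your second step. The theorem's asymptotics are in the number of Gibbs iterations with the dataset held fixed; for fixed data the posterior of $W_{gt}$ does \emph{not} concentrate at the true $W^{*}_{gt}$ --- its full conditional variance is bounded away from zero because the number of repeated measures $r_{gt}$ and the information flowing from the $Z_t$-block are fixed --- so no Bernstein--von Mises argument in any regime compatible with the statement can deliver $W^{(s)}_{gt}-W^{*}_{gt}=O_p(n^{-1/2})$. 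You flag this yourself as the ``main obstacle,'' but it is not an obstacle to be overcome: it is a step that is false in the relevant regime and, more importantly, unnecessary for the claim.

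The paper's proof avoids concentration entirely. It observes that the limiting law of the Gibbs draw is the posterior $F(W_{gt}\mid y_{igt}, u_{igt})$, which under the Bayesian model is exactly the conditional law of the (unknown) true latent variable given the data. Hence the joint law of the data together with the Gibbs draw converges to the joint law of the data together with the true latent variable, and the continuous mapping theorem applied to the regression functional gives $\tilde{\Gamma}_{pv,k}^{(r)} \xrightarrow{\mathcal{L}} \hat{\Gamma}_{pv}$ directly, where $\hat{\Gamma}_{pv}$ is itself a \emph{random} object. The limit asserted by the theorem is convergence in distribution to this random estimator, not convergence in probability to a deterministic one; your Slutsky step aims at the latter and therefore needs exactly the concentration you cannot supply. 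If you drop your second step and instead identify the posterior with the conditional law of the true latent variable given the data, your remaining two steps assemble into essentially the paper's proof.
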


\begin{proof}

We shall use $f(\cdot)$ and $F(\cdot)$ as a generic symbol for density and distribution function of a probability distribution, respectively.

Let, $W_{gt, k}^{(r)}$ be the samples taken from the simulated $k$-th Markov chain in Gibbs sampling, after $r$ iterations (i.e. after $r$ transition steps of the Markov chain), for the latent variable $W_{gt}$. It is well known that under certain regularity conditions such as positivity and connected of the conditional densities the simulated Markov chain will converge to a stationary distribution which is the desired posterior distribution, as shown in \cite{robert2013monte}. Therefore, as $r\rightarrow \infty$, we should have $W_{gt, k}^{(r)} \xrightarrow{\mathcal{L}} W_{gt, k}$, where $W_{gt, k}$ is a random variable following the posterior distribution with density $f(W_{gt} \mid \text{data})$. Here, the notation $\xrightarrow{\mathcal{L}}$ is used to denote convergence in distribution for random variables.

Hence, the joint distribution of the observed data together with the posterior samples from Gibbs sampling, converges to the joint distribution of the observed data together with unobserved latent variables.

\begin{align*}
& F\left(y_{igt}, u_{igt}, W_{gt, k}^{(r)}; \forall i, g, t\right) \\
= \quad & F\left(y_{igt}, u_{igt}; \forall i, g, t\right)
F\left(W_{gt}^{(r)} \mid y_{igt}, u_{igt} ; \forall i, g, t\right)\\
\xrightarrow{\mathcal{L}} \quad & F\left(y_{igt}, u_{igt}; \forall i, g, t\right)
F\left(W_{gt} \mid y_{igt}, u_{igt} ; \forall i, g, t\right)\\
= \quad & F\left(y_{igt}, u_{igt}, W_{gt, k}; \forall i, g, t\right)
\end{align*}

where we use the fact that $W_{gt}^{(r)} \xrightarrow{\mathcal{L}} W_{gt}$ means the distribution function converges pointwise.

Now the proof follows once we note that, $\Gamma_{pv}$, as a regression estimate, is a continuous function of $y_{igt}, u_{igt}$ and $W_{gt}$. Therefore, by continuous mapping theorem,

\begin{equation}
    \left[\sum_{i, g, t} u_{igt} u_{igt}^{\top}\right]^{-1} 
\left[\sum_{i, g, t} \left( y_{igt} - W_{gt, k}^{(r)}\mathbf{1}_2 \right)u_{igt}^{\top}\right] 
\xrightarrow{\mathcal{L}} 
\left[\sum_{i, g, t} u_{igt} u_{igt}^{\top}\right]^{-1} 
\left[\sum_{i, g, t} \left( y_{igt} - W_{gt}\mathbf{1}_2 \right)u_{igt}^{\top}\right]
\label{eqn:mean-converge}
\end{equation}

Let, $\hat{\Gamma}$ be the regression estimate if the true value of $W_{gt}$ were known, while $\tilde{\Gamma}_{k}^{(r)}$ be the regression estimate where $W_{gt}$ is substituted by the posterior sample $W_{gt, k}^{(r)}$.

Then, equation \ref{eqn:mean-converge} simply means $\tilde{\Gamma}_{k}^{(r)} \xrightarrow{\mathcal{L}} \hat{\Gamma}$, and hence particularly one entry of the matrix $\tilde{\Gamma}_{pv, k}^{(r)} \xrightarrow{\mathcal{L}} \hat{\Gamma}_{pv}$.
Thus, the result follows.
\end{proof}

There are particularly two remarks to be made relating to the above theorem.

\begin{enumerate}
    \item $\tilde{\Gamma}_{pv, k}^{(r)}$ is not the posterior samples obtained from the Gibbs sampling, but rather a regression estimate based on the posterior sample $W_{gt, k}^{(r)}$.
    \item Using this theorem, we know that under the null hypothesis that valuation has no direct causal effect on price in the TSLH model, a test for $H_0: \Gamma_{pv} = 0$ can be conducted using the regression estimate based on the proxies of the latent variables obtained from Gibb's sampler values. In other words, for each Markov chain, we can obtain samples $\tilde{\Gamma}_{pv, k}^{(r)}$, and as $r \rightarrow \infty$, these samples behave like independent and identically distributed samples from the distribution of $\hat{\Gamma}_{pv}$. Therefore, we can simply use the Hybrid confidence sets obtained from the Gibbs sampling to test our hypothesis.
\end{enumerate}

To test the causality from $\log(\text{Valuation}_{igt})$ to $\log(\text{Price}_{igt})$, we shall require a conditional independence test between these two variables conditioned on the value of $W_{gt}$ and $u_{igt}$. In the given model, such independence would hold if and only if $\Gamma_{pv} = 0$. Therefore, in view of the above theorem, using the posterior samples $W_{gt, k}^{(r)}$, we obtain the estimate of $\hat{\Gamma}$ as $1.006317$, and the $95\%$ confidence set turns out to be $(0.981592, 1.020155)$, which does not contain $0$, thereby, showing sufficient evidence against the null hypothesis of conditional independence.

On the other hand, based on the fitted model, let us consider the residuals from Valuation predicting component, and the residuals from Price predicting component (leaving Valuation as an explanatory variable), and denote their product moment correlation as $r_{pv}$. In other words,

$$r_{pv} = \text{cor}\left( \log(\text{Valuation}_{igt}) - W_{gt} - \Gamma_v \log(\text{Volume}_{igt}), \log(\text{Price}_{igt}) - W_{gt} - \Gamma_p \log(\text{Volume}_{igt}) \right)$$

Tracking this correlation where the latent variables and parameters are substituted by posterior samples obtained from the Gibbs sampler, we obtain the posterior mean of $r_{pv}$ as $0.7819911$. In contrast to that, the Pearson's correlation coefficient between the logarithms of those variables valuation and price is $0.9573915$, and the Pearson's correlation coefficient between those variables valuation and price without any transformation is $0.9609977$. Note that, the correlation between the residuals obtained from the causal linear model described before was $0.859$. Therefore, we find that, the linear model was enough to structurally model some of the dependence, while the TSLH model was further able to reduce the correlation by explaining temporal dependence structure within the data. However, there was still unexplained correlation, which was simply a manifestation of the causal relationship between auctioneers' valuation and the ultimate selling price at the auction.

\section{Remarks on Automation of the Process and Conclusion}
The preceding sections show us the utmost significance of the manual valuation of the tea packets that come in, in predicting the final price level. Thus the hope of automating the entire procedure seems unrealistic.

However, since this valuation is based on the inherent characteristics of the tea dust packets, and the volume of packets that arrive, we strongly believe that the practice of using valuations to set base prices can be done away with. George and Hui, in their paper \cite{optimal}, provide an ingenious way to estimate demand in the auction market, under the Independent Private Value Model second-price auctions. A generalization of this method, in this regard, to the Common Value (CV) \cite{auction_book} case, where the optimal symmetric bidding strategies are not the bidder signals themselves, but a monotonic function of their signals, may be helpful in our case. Then, knowing the distribution of the bidder values, an optimal reserve price may be set to maximize the \textit{ex-ante} expected revenue, which is a function of this distribution. \par
Levin and Smith, in their paper \cite{disproof}, have shown that under the non-IPV case, the optimal reserve price for the seller converges to her true value - here it's her manufacturing costs. Hence if the pool of bidders grow, then it would be safe for the seller to set the reserve price at her manufacturing costs.\par
Collusion among the bidders is often a very practical problem to ponder about, and most methodologies fail under scenarios not robust to such behavior. 
For example, in second-price auctions, one source of asymmetric equilibrium, (when the distribution has a support $[0,\omega]$) is for one bidder to bid $\omega$ and the others to bid $0$(or the minimum possible price). This is often realized in real-life scenarios, e.g. spectrum auctions. This is a possible solution here, and given that several auctions occur regularly in this market, bidders can sequentially alternate the role of the highest bidder, and thus can all be better off, at the cost of the seller. Our pricing model provides a way to detect such behavior on the part of the bidders. Since our pricing model with valuations provides an excellent fit to the true prices, this can be used to detect collusion. As in the case of collusion, the final price of the transaction would be low compared to the expected transaction price, a large deviation from the predictions would indicate the presence of such collusion. Our prices, with the estimated parameters, approximately follow a normal distribution, thus a low p-value from this distribution could be used as an indication of collusion of bidders.\par 
Further research on the aforesaid aspects could bring exciting breakthroughs in the path of automation.

{\bibliographystyle{unsrt}
\bibliography{maindoc.bbl}}

\end{document}